\newcommand{\bb}{ {\bf b} }
\newcommand{\E}{\mathbb{E}}
\newcommand{\bH}{ {\bf H} }
\newcommand{\bef}{ {\bf f} }
\newcommand{\bm}{ {\bf m} }
\newcommand{\bZ}{ {\bf Z} }
\newcommand{\bC}{ {\bf C} }
\newcommand{\bD}{ {\bf D} }
\newcommand{\bQ}{ {\bf Q} }
\newcommand{\bV}{ {\bf V} }
\newcommand{\bg}{ {\bf g} }
\newcommand{\bW}{ {\bf W} }
\newcommand{\bL}{ {\bf L} }
\newcommand{\bT}{ {\bf T} }
\newcommand{\br}{ {\bf r} }
\newcommand{\bgamma}{ {\boldsymbol \gamma} }
\newcommand{\bepsilon}{ {\boldsymbol \epsilon} }
\newcommand{\bvarepsilon}{ {\boldsymbol \varepsilon} }
\newcommand{\bDelta}{ {\boldsymbol \Delta} }
\newcommand{\bbeta}{ {\boldsymbol \beta} }
\newcommand{\bmu}{ {\boldsymbol \mu} }
\newcommand{\bSigma}{ {\boldsymbol \Sigma} }
\newcommand{\bsigma}{ {\boldsymbol \sigma} }
\newcommand{\bOmega}{ {\boldsymbol \Omega} }
\newcommand{\bomega}{ {\boldsymbol \omega} }
\newcommand{\boeta}{ {\boldsymbol \eta} }
\newcommand{\bnu}{ {\boldsymbol \nu} }
\newcommand\scalemath[2]{\scalebox{#1}{\mbox{\ensuremath{\displaystyle #2}}}}
\theoremstyle{plain}
\newtheorem{theorem}{Theorem}[section]
\newtheorem{Proposition}[theorem]{Proposition}
\theoremstyle{remark}
\begin{document}

\begin{frontmatter}
\title{\small Learning and forecasting of age--specific period mortality via B--spline processes with locally--adaptive dynamic coefficients}
\runtitle{Learning and forecasting of age--specific period mortality}

\begin{aug}
\author[A]{\fnms{Federico} \snm{Pavone}\ead[label=e1,mark]{federico.pavone@phd.unibocconi.it}},
\author[B]{\fnms{Sirio} \snm{Legramanti}\ead[label=e2,mark]{sirio.legramanti@unibg.it}}
\and
\author[A]{\fnms{Daniele} \snm{Durante}\ead[label=e3,mark]{daniele.durante@unibocconi.it}}
\address[A]{Department  of Decision Sciences and Institute for Data Science and Analytics,
Bocconi University, \\
\printead{e1,e3}}

\vspace{-5pt}

\address[B]{Department of Economics, University of Bergamo,
\printead{e2}}

\end{aug}


\begin{abstract}
Although the analysis of human mortality has a well--established history, the attempt to accurately  forecast future death--rate patterns for different age groups and time horizons still attracts active research. Such a predictive focus has motivated an increasing shift towards more flexible representations of age--specific period mortality trajectories at the cost of reduced interpretability. Although this perspective has led to successful predictive strategies, the inclusion of interpretable structures in modeling  of human mortality can be, in fact, beneficial for improving forecasts. We pursue this direction via a novel  \textsc{b}--spline process with locally--adaptive dynamic coefficients. Such a process outperforms state--of--the--art forecasting strategies by explicitly  incorporating the core structures of period mortality within an interpretable formulation which enables inference  on age--specific mortality trends and the corresponding rates of change across time. This is obtained by modeling the age--specific death counts via a Poisson log--normal model parameterized through a linear combination of \textsc{b}--spline bases with dynamic coefficients that characterize time changes in mortality rates via  suitably defined stochastic differential equations. While flexible, the resulting formulation can be accurately approximated by a Gaussian state--space model that facilitates closed--form Kalman filtering, smoothing and forecasting, for both the trends of the spline coefficients and the corresponding  first derivatives, which measure rates of change in mortality  for different age groups. As illustrated in applications to mortality data from different countries, the proposed model outperforms state--of--the--art methods both in point forecasts and in calibration of predictive intervals. Moreover,  it unveils substantial differences in mortality patterns across countries and ages, both in the past decades and during the \textsc{covid}--19 pandemic.
\end{abstract}

\begin{keyword}
\kwd{\textsc{b}--spline}
\kwd{Death rate}
\kwd{Kalman filter}
\kwd{Mortality forecasting}
\kwd{Nested Gaussian process}
\end{keyword}

\end{frontmatter}

\section{Introduction} \label{sec_1}
Since the fundamental contribution by  \citet{lee1992modeling} on stochastic modeling and forecasting of human mortality patterns, several efforts  have been devoted towards the development of increasingly accurate strategies for predicting the future evolution of death rates for different age groups and countries \citep[e.g.,][]{booth2008mortality,currie2016fitting,hunt2021structure}. Due to its direct impact in guiding social, economic, environmental and health--care policies, such an endeavor is of paramount interest in a variety of fields, including demography \citep[e.g.,][]{lee2001evaluating,li2005coherent,raftery2013bayesian, li2013extending,camarda2019smooth}, actuarial sciences \citep[e.g.,][]{renshaw2003lee,renshaw2006cohort,cairns2006two,plat2009stochastic,currie2016fitting} and statistics \citep[e.g.,][]{lee1992modeling, dellaportas2001bayesian,hyndman2007robust,alexopoulos2019bayesian,aliverti2021dynamic}, among others. Nonetheless, despite this collective effort, there is still the lack of a consensus on a superior solution. In fact, several peculiar characteristics of age--specific period mortality trajectories keep motivating active and ongoing innovations in stochastic modeling and forecasting of death rates via increasingly flexible representations. 

\begin{figure}       
\centering
     \begin{subfigure}[b]{1\textwidth}
         \centering
         \includegraphics[width=1\textwidth]{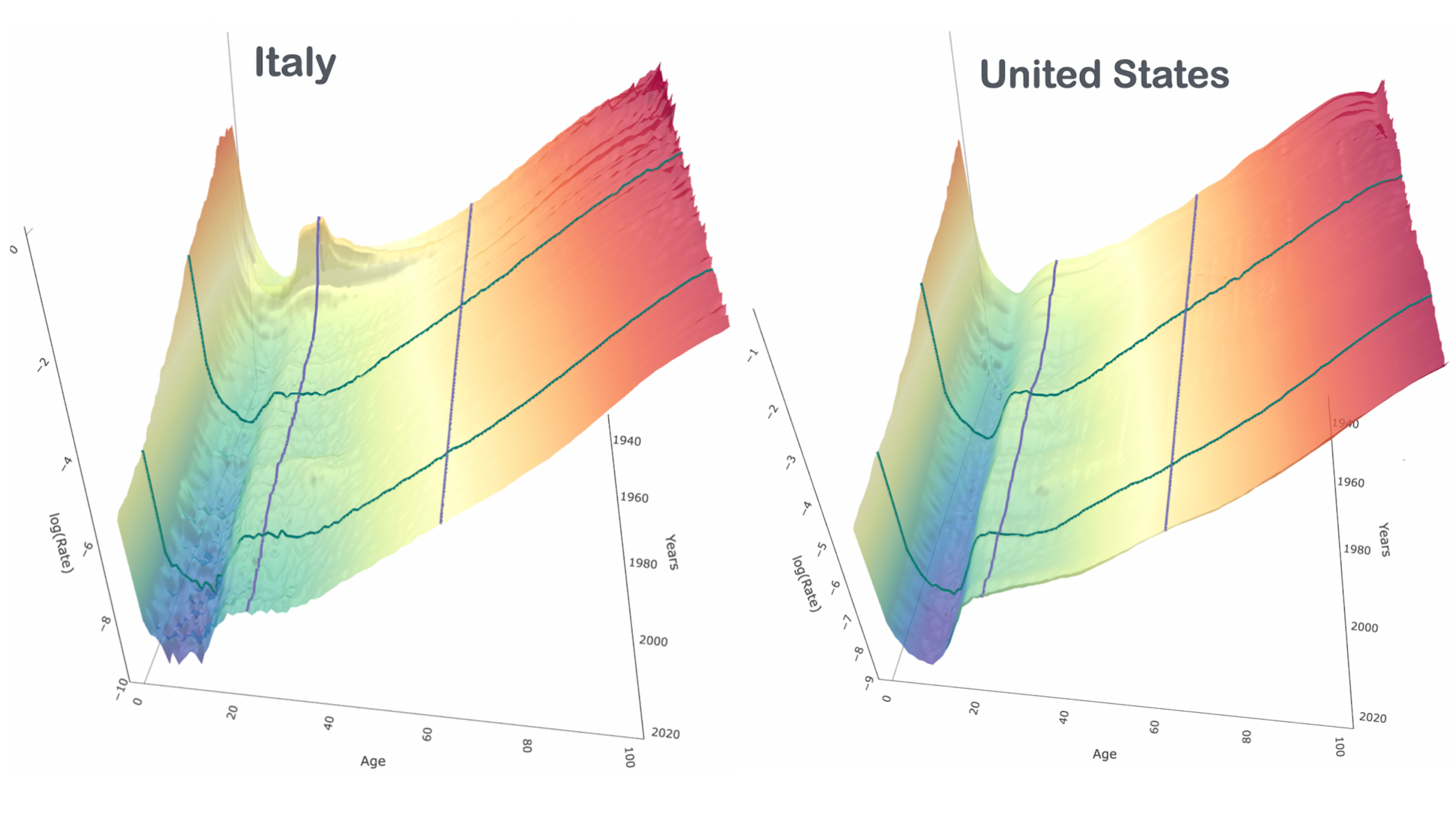}
         \hfill
     \end{subfigure}
 \begin{subfigure}[b]{1\textwidth}
         \centering
         \includegraphics[width=1\textwidth,height=0.42\textwidth]{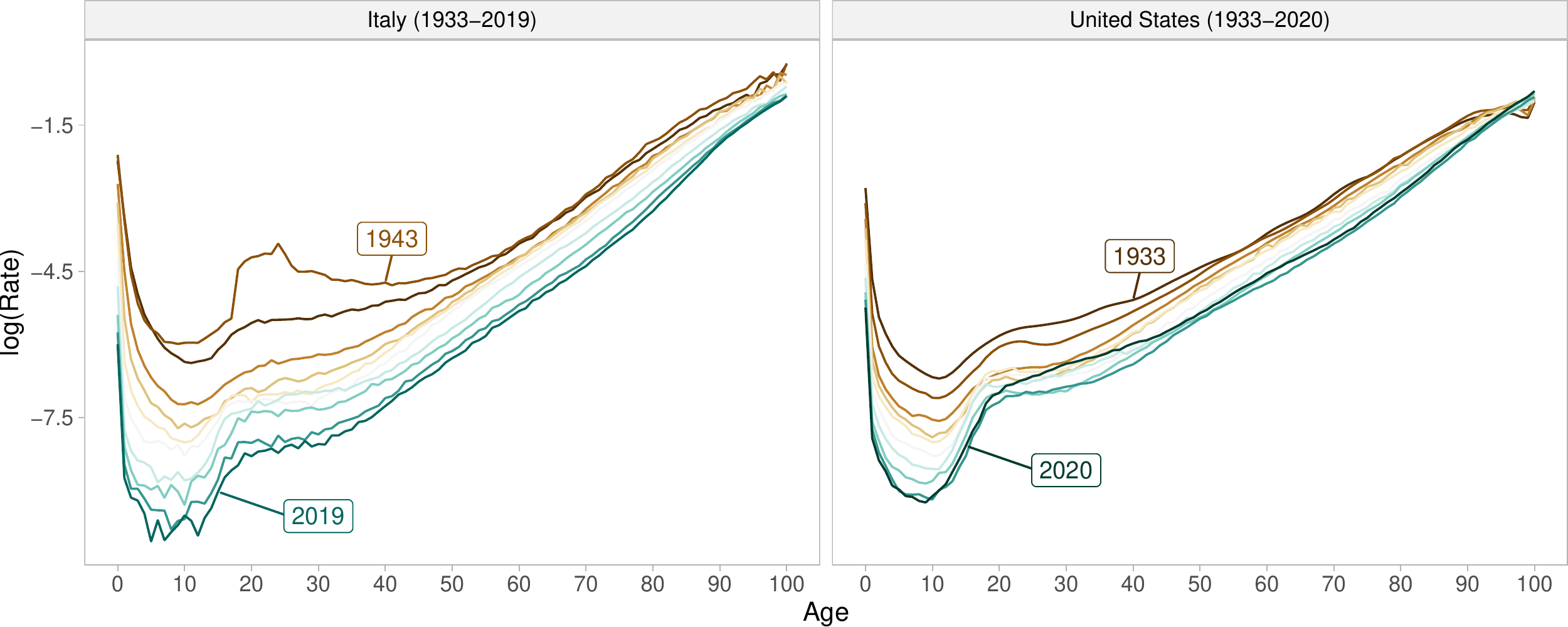}
     \end{subfigure}
         \caption{\small Graphical representations of the observed age--period log--mortality rates from 1933 until 2019 for Italy, and from 1933 to 2020 for the United States. Top panels provide 3\textsc{d} visualizations of the age--period log--mortality rate surface, whereas bottom panels comprise a 2\textsc{d} illustration of the age--specific trajectories for each period. Data are retrieved from the \texttt{Human} \texttt{Mortality} \texttt{Database} (\url{https://www.mortality.org/}). See the online article for the color version of this figure.}
                \label{fig:data}
\end{figure}

As illustrated in Figure \ref{fig:data}, the age--period mortality surfaces exhibit a combination of global and local variations. When expressed as a function of age, these mortality trajectories display similar and generally--smooth shapes, whereas the overall dynamic evolution of these trajectories across periods exhibits a progressive downward shift, whose rate of change varies locally  with both the age classes and years. Although the inclusion of these core structures is expected to enhance both inference and forecasting performance, current literature still lacks a statistical model that can effectively address such goals within a single formulation. In fact, while successful extensions of the age--period bilinear formulation by  \citet{lee1992modeling} and of the additive age--period--cohort representation in, e.g., \citet{holford1983estimation} improve the flexibility via more general basis expansions of age effects with time--varying coefficients,  the selected bases are either simple parametric functions typically active in the whole age range \citep{brouhns2002poisson,czado2005bayesian,cairns2006two,delwarde2007smoothing,plat2009stochastic,cairns2009quantitative,haberman2011comparative,o2012explaining,wong2018bayesian} or are inferred through functional principal components analysis \citep{hyndman2007robust,hyndman2013coherent}. This implies that the induced death--rate forecasts are mainly based on a combination of global trends in mortality across ages which do not explicitly account for local heterogeneities in mortality levels and the corresponding rates of change for specific age classes. Recalling Figure \ref{fig:data}, the mortality patterns exhibit both global and local variations across years and ages, thereby suggesting that a suitable representation capable of including these two  behaviors would yield improved forecasts with respect to those obtained under a mainly--global perspective.

An effective option for addressing the aforementioned goal is to rely on a more structured and interpretable basis expansion that incorporates possible heterogeneity in mortality patterns for different age groups. Within this framework, the contribution by \citet{heligman1980age} provides a first effective answer which expresses the age patterns of mortality via a combination of three basis functions corresponding to infant mortality, accident hump and elderly--age mortality; see also  \citet{dellaportas2001bayesian}, \citet{mazzuco2018mortality}  and \citet{alexopoulos2019bayesian} for subsequent extensions. While these formulations yield interpretable inference, the combination of only three  bases is generally not sufficient to flexibly characterize the broad spectrum of global and local variations in the age--specific mortality rates across years, thereby affecting forecasting performance \citep[e.g.,][]{camarda2019smooth}. To overcome this issue, a possible solution consists in specifying a richer set of basis functions, each active --- i.e., non--zero --- only in a subset of the ages, with these subsets varying across bases in order to cover the whole age range. Expressing the age pattern of mortality through a linear combination of these basis functions yields a globally--smooth, yet flexible, representation which additionally accounts for possible local heterogeneities in specific age classes via the control  on the coefficients for the bases active in those classes. Such a direction has been partially explored in \citet{currie2004smoothing} via two--dimensional penalized \textsc{b}--splines to jointly characterize age--period patterns of mortality; see also  \citet{camarda2019smooth} for a recent effective extension of this approach which incorporates suitable constraints and prior knowledge to improve forecasting performance. Although both formulations provide a sensible representation of age--period mortality surfaces, the two--dimensional \textsc{b}--splines perspective enforces a constant smoothing both across ages and periods, and prevents from treating the period component as a time--indexed stochastic process on which to impose a suitable dynamic model for principled inference and forecasting. As illustrated in Figure  \ref{fig:data} with a focus on two of the countries analyzed in our application, while the age pattern of mortality often exhibits a smooth trajectory, the time changes in such a trajectory fluctuate between periods of rapid and slow variations, affecting the age classes with different magnitudes. These peculiar characteristics necessarily require a careful statistical model which can effectively combine interpretable basis expansions for the age patterns of mortality with a flexible stochastic process having locally--varying smoothness for the dynamic evolution of such patterns across periods. While the aforementioned contributions include some of these structures, there is still the lack of a unique representation that  effectively accounts for all these peculiar characteristics within a single formulation.

Motivated by the above discussion and by the mortality data discussed in Section~\ref{sec_1.1},  we cover this  gap in Section~\ref{sec_2} by defining a Poisson log--normal model for the age--specific death counts whose rate is parameterized via a novel \textsc{b}--spline process with locally--adaptive dynamic coefficients which extends the nested Gaussian process by \citet{zhu2013locally} in a number of directions. Our novel formulation characterizes the age patterns of mortality via a suitable combination of  interpretable \textsc{b}--spline bases --- each active in  different age intervals --- and incorporates flexible dynamic changes in such patterns by allowing the splines coefficients to evolve in time via a system of stochastic differential equations that account for locally--varying smoothness in time trajectories, and facilitate borrowing of information across the coefficients of contiguous splines. Such a representation is conceptually and practically more suitable than the bivariate \textsc{b}--splines approach in \citet{currie2004smoothing} and \citet{camarda2019smooth} since it allows to properly treat the period component as a dynamic locally--adaptive stochastic process rather than just a function of time with a constant smoothness. In addition, it yields a more flexible characterization of age--period mortality patterns relative to classical parametric extensions of  the \citet{lee1992modeling} model in, e.g., \citet{brouhns2002poisson,czado2005bayesian,cairns2006two,delwarde2007smoothing,plat2009stochastic,cairns2009quantitative,haberman2011comparative,o2012explaining} and \citet{wong2018bayesian}, while preserving interpretability via the use of  \textsc{b}--spline bases instead of those inferred from, e.g.,  functional principal components \citep{hyndman2007robust,hyndman2013coherent}. 

As clarified in Sections~\ref{sec_2}--\ref{sec_4}, these advancements yield a statistical model which is both  flexible and interpretable, thereby improving accuracy in point forecasts, calibration of predictive intervals, and inference potentials relative to state--of--the--art formulations, at no additional cost in computational tractability. In fact, in Section~\ref{sec_3} we derive a provably accurate Gaussian state--space approximation of the proposed model that allows the implementation of closed--form Kalman filter updates for smoothing, filtering and forecasting of both the trends and the first derivatives for the trajectories of the spline coefficients. This computational tractability  is in contrast with recent flexible representations that require \textsc{mcmc} methods to benefit from a fully--Bayesian approach, which further allows the choice of priors for the structural model parameters \citep[e.g.,][]{wong2018bayesian,alexopoulos2019bayesian}. Moreover, unlike for state--of--the--art extensions of the \citet{lee1992modeling} model that generally employ an \textsc{arima} formulation for the dynamic parameters \citep{brouhns2002poisson,czado2005bayesian,cairns2006two,hyndman2007robust,plat2009stochastic,cairns2009quantitative,haberman2011comparative,o2012explaining,hyndman2013coherent}, our proposal explicitly incorporates and flexibly learns not only mortality trends but also the corresponding time--varying rates of change. Although the importance of accounting for the dynamic rates of change in mortality forecasting has been recently illustrated in \citet{camarda2019smooth}, such a concept has received limited attention to date, and there is a lack of statistical models which explicitly include and learn these higher--level patterns within a single formulation. The empirical performance illustrated in Section~\ref{sec_4} for our proposed model clarifies that this additional structure is not only beneficial in delivering improved point forecasts and predictive intervals than state--of--the--art competing methods, but also allows to quantify and compare relevant mortality accelerations experienced both in past and recent years across different countries and age groups. For example, our proposed model reveals substantially different patterns in age--specific mortality across countries during the last two decades and in the recent \textsc{covid}--19 pandemic. Concluding remarks and future research directions are provided in Section~\ref{sec_5}, whereas codes and tutorial implementations are available at \url{https://github.com/fpavone/BSP-mortality}.

\subsection{Motivating Application}\label{sec_1.1}
The novel \textsc{b}--spline process (\textsc{bsp}) developed  in Sections~\ref{sec_2}--\ref{sec_3} is meant to provide a general modeling and forecasting framework that can be applied to any country of interest. To this end, the motivating application we consider in Section~\ref{sec_4} aims at illustrating the practical advantages of \textsc{bsp} in learning and forecasting several mortality  patterns characterized by a broad range of different evolutions across years and ages, ranging from smooth trajectories to more rapid shocks, over a broad time horizon. This motivates our focus on four illustrative countries, namely Italy, Sweden, United Kingdom and United States, whose gender--specific age--period log--mortality rates are available within the  \texttt{Human} \texttt{Mortality} \texttt{Database} (\url{https://www.mortality.org/}) for a wide time range which spans from $1933$ until either $2019$ or $2020$, and displays different country--specific evolutions of mortality over periods and age classes, along with fluctuations of varying magnitude due to shocks.

Although  the \texttt{Human} \texttt{Mortality} \texttt{Database} comprises data also for other countries, the corresponding time window is generally much shorter than  those studied in Section~\ref{sec_4}. In addition, Italy, Sweden, United Kingdom and United States exhibit a number of peculiar characteristics which make these countries of particular interest not only in forecasting, but also for inference. More specifically, as we will illustrate in Figures~\ref{fig:smoothing} and \ref{fig:covid}, Italy and the United Kingdom provide interesting examples to quantify the ability of the  proposed \textsc{bsp} in flexibly learning different magnitudes of the mortality shock, and the corresponding rates of change, associated with the World War {\rm II}. Recalling, e.g., \citet{vaupel1994longer}, Sweden is historically characterized by low mortality rates, but the recent evidence of slower rates of increment in the life expectancy relative to other countries \citep{drefahl2014losing} and the less stringent policy adopted during the \textsc{covid}--19 pandemic \citep[][]{wang2022estimating,juul2022mortality} make Sweden an interesting case study. The United States have  also experienced a slower increment in life expectancy in recent years, which culminated in a decreasing pattern over the past decade \citep{woolf2019life}. This specific behavior has motivated several explanatory studies mostly focused on peculiar mortality patterns and vulnerabilities associated with young and adult age classes \citep[e.g.,][]{remund2018cause,glei2021us}. These could also explain particular differences for the age--specific excess mortality in the United States during \textsc{covid}--19, relative to the patterns observed in other countries \citep[e.g.,][]{katzmarzyk2020obesity,wiemers2020disparities,goldstein2020demographic}. The  \textsc{bsp} formulation developed   in Sections~\ref{sec_2}--\ref{sec_3} is carefully designed to flexibly incorporate all these multifaceted  patterns and, therefore, the analysis of these four  countries provides a comprehensive setting to obtain empirical evidence of improved performance in inference and forecasting relative to state--of--the--art alternatives.

As highlighted in Section~\ref{sec_4}, the proposed   \textsc{bsp} yields improved forecasts also when applied to a different subgroup of countries from the \texttt{Human} \texttt{Mortality} \texttt{Database}, such as, for example, Czech Republic, Denmark and France. A discussion on future studies of  \textsc{bsp}  performance for low-- and middle--income countries, whose data are currently unavailable in the \texttt{Human} \texttt{Mortality} \texttt{Database}, can be found in  Section~\ref{sec_5}.

\section{Model Formulation}\label{sec_2}
Let $d_{xt}$ and $\textsc{e}_{xt}$ be the total death counts and the average number of individuals at risk (also known as {\em central exposed to risk})  at age $x$ in period $t$, respectively, within a given population. Following the overarching focus in the literature on mortality modeling \citep[e.g.,][]{booth2008mortality,hunt2021structure} our aim is to improve inference and forecasting of the observed central mortality rates defined as $m_{xt}=d_{xt}/\textsc{e}_{xt}$. 

To this end, let $\overline{m}_{xt}=\E(m_{xt} \mid \overline{m}_{xt})$ denote the underlying expected mortality rate at age~$x$ within period~$t$. We  introduce in Section~\ref{sec_21} a Poisson log--normal model for $d_{xt}$, whose rate parameter $\textsc{e}_{xt}\overline{m}_{xt}$ is allowed to flexibly vary across both ages $x$ and periods $t$ through a novel \textsc{b}--spline process for the expectation $f_t(x)$ of $\log \overline{m}_{xt}$. As clarified in Section~\ref{sec_22}, such a model admits a provably--accurate Gaussian state--space approximation which expresses the observed log--mortality rates $\log m_{xt}=\log (d_{xt}/\textsc{e}_{xt})$ via a linear combination of \textsc{b}--spline bases whose dynamic coefficients and the associated derivatives vary in time through a system of Gaussian state equations. This allows closed--form filtering, smoothing and forecasting of the coefficients trajectories and, as a consequence, of the induced patterns in the log--mortality rates $\log m_{xt}$ via a direct application of standard Kalman filter updates \citep{kalman1960new}; see Section~\ref{sec_3}.

\subsection{B--Spline Process with Locally--Adaptive Dynamic Coefficients}\label{sec_21}
Recalling the above discussion, we model the death counts  $d_{xt}$, at each age $x \in \mathcal{X} \subset \mathbb{R}^{+}$ and period $t \in \mathcal{T} \subset \mathbb{R}^{+}$ via the Poisson log--normal distribution
\begin{equation}
(d_{xt} \mid \overline{m}_{xt}) \stackrel{\mbox{\scriptsize ind}}{\sim} \mbox{Poisson}(\textsc{e}_{xt}\overline{m}_{xt}), \quad \mbox{with} \quad (\log \overline{m}_{xt} \mid f_{t}(x)) \stackrel{\mbox{\scriptsize ind}}{\sim} \mbox{N}(f_{t}(x), \sigma_{\overline{m}}^2),
\label{eq1}
\end{equation}
for every $x\in \mathcal{X}$ and $t\in \mathcal{T}$, where $f_{t}(x)$ denotes a flexible function of age $x$ whose shape is allowed to vary with  $t$, whereas $ \sigma_{\overline{m}}^2$ encodes the global amount of over--dispersion in the observed death counts. The Poisson log--normal assumption in \eqref{eq1} has been considered in \citet{wong2018bayesian} to account for extra variability in the Poisson Lee--Carter model proposed by \citet{brouhns2002poisson} and \citet{czado2005bayesian}. Although this is a sensible modification which allows to formally incorporate age--specific heterogeneity in period mortality --- possibly arising from differences in cohort effects --- \citet{wong2018bayesian} still rely on the classical \citet{lee1992modeling}  parametric bilinear form for $f_{t}(x)$. As illustrated in Table~\ref{tab} (see column \textsc{lc}), such a form yields an overly--restrictive characterization of age--period mortality patterns that affects both inference and forecasting performance; see also \citet{delwarde2007smoothing} for an additional example of a Poisson  log--bilinear model that employs the classical \citet{lee1992modeling} construction.

To address the above issues and incorporate the core patterns of mortality discussed in Section~\ref{sec_1} and illustrated in Figure~\ref{fig:data}, we combine model \eqref{eq1} with a flexible, yet interpretable, representation for $f_{t}(x)$ based on a novel \textsc{b}--spline process with locally--adaptive dynamic coefficients. This formulation defines $f_{t}(x)$ through a linear combination of $p$ pre--selected \textsc{b}--spline basis functions of age, $g_1(x), \ldots, g_p(x)$,  whose associated coefficients $\beta_1(t), \ldots, \beta_p(t)$  jointly evolve over time via a system of stochastic differential equations that induce locally--varying smoothness and borrowing of information across contiguous bases. In particular, let $\bb(t)=[\beta_1(t),\partial \beta_1(t)/\partial t, a_1(t), \ldots, \beta_p(t),\partial \beta_p(t)/\partial t, a_p(t)]^{\intercal}$ be the $(3p \times 1)$--dimensional vector comprising the \textsc{b}--splines coefficients $\beta_1(t), \ldots, \beta_p(t)$, along with the associated first  derivatives $\partial\beta_1(t)/\partial t, \ldots,\partial \beta_p(t)/\partial t$, and the local instantaneous mean functions $a_1(t), \ldots, a_p(t)$ which induce time--varying smoothness  by controlling the expected value of the second derivatives  at time $t$, namely $a_j(t)=\E[\partial^2 \beta_j(t)/\partial^2 t \mid a_j(t)]$, for  $j=1, \ldots, p$. Moreover,  denote with $\bvarepsilon_t=[\varepsilon_{\beta_1}(t), \varepsilon_{a_1}(t), \ldots, \varepsilon_{\beta_p}(t), \varepsilon_{a_p}(t)]^{\intercal}$ a $(2p \times 1)$--dimensional vector encoding independent Gaussian white noise processes. Then, leveraging these quantities and letting $\tau=t/\lambda$ be a selected reference time scale, the proposed \textsc{bsp} assumes 
\begin{eqnarray}
f_t(x)&=& \sum\nolimits_{j=1}^p \beta_j(t)g_j(x), \qquad \  \ \quad \mbox{for any } x \in \mathcal{X} \ \mbox{and} \ t \in \mathcal{T}, \label{eq2} \\
\partial \bb(\lambda\tau)/\partial \tau&=& \lambda ({\bf I}_p \otimes \bC) \bb(\lambda\tau)+ ({\bf I}_p \otimes \bD)(\bOmega^{1/2}\bvarepsilon_{\tau}), \qquad \mbox{for any } \lambda \tau=t \in \mathcal{T},
\label{eq3}
\end{eqnarray}
where ${\bf I}_p $ is the $p \times p$ identity matrix,  $\otimes$ denotes the Kronecker product, $\lambda>0$ corresponds to a length--scale parameter that allows to preserve time unit invariance, $\bOmega$ is a suitably--specified  $2p \times 2p$ correlation matrix inducing local borrowing of information across contiguous splines coefficients --- via the correlation among the corresponding derivatives and local instantaneous means --- whereas $\bC$ and $\bD$  are system matrices defined as 
\begin{eqnarray}
\bC=\begin{bmatrix}
0 & 1 & 0 \\
0 & 0 & 1 \\
0 & 0 & 0
\end{bmatrix}, \qquad \qquad 
\bD=\begin{bmatrix}
0 & 0 \\
\sigma_{\beta} & 0 \\
0 & \sigma_a
\end{bmatrix},
\label{eq4}
\end{eqnarray}
with $\sigma_{\beta}>0$ and $\sigma_a>0$ denoting two  scale parameters. As clarified in \eqref{eq4},  these matrices are pre--specified to induce the desired system of stochastic differential equations; see   \citet[][A.6]{zhu2013locally} for a related definition of $\bC$ and $\bD$ in the univariate case.

\begin{figure}[t]
	\centering
		\includegraphics[width=0.88\textwidth]{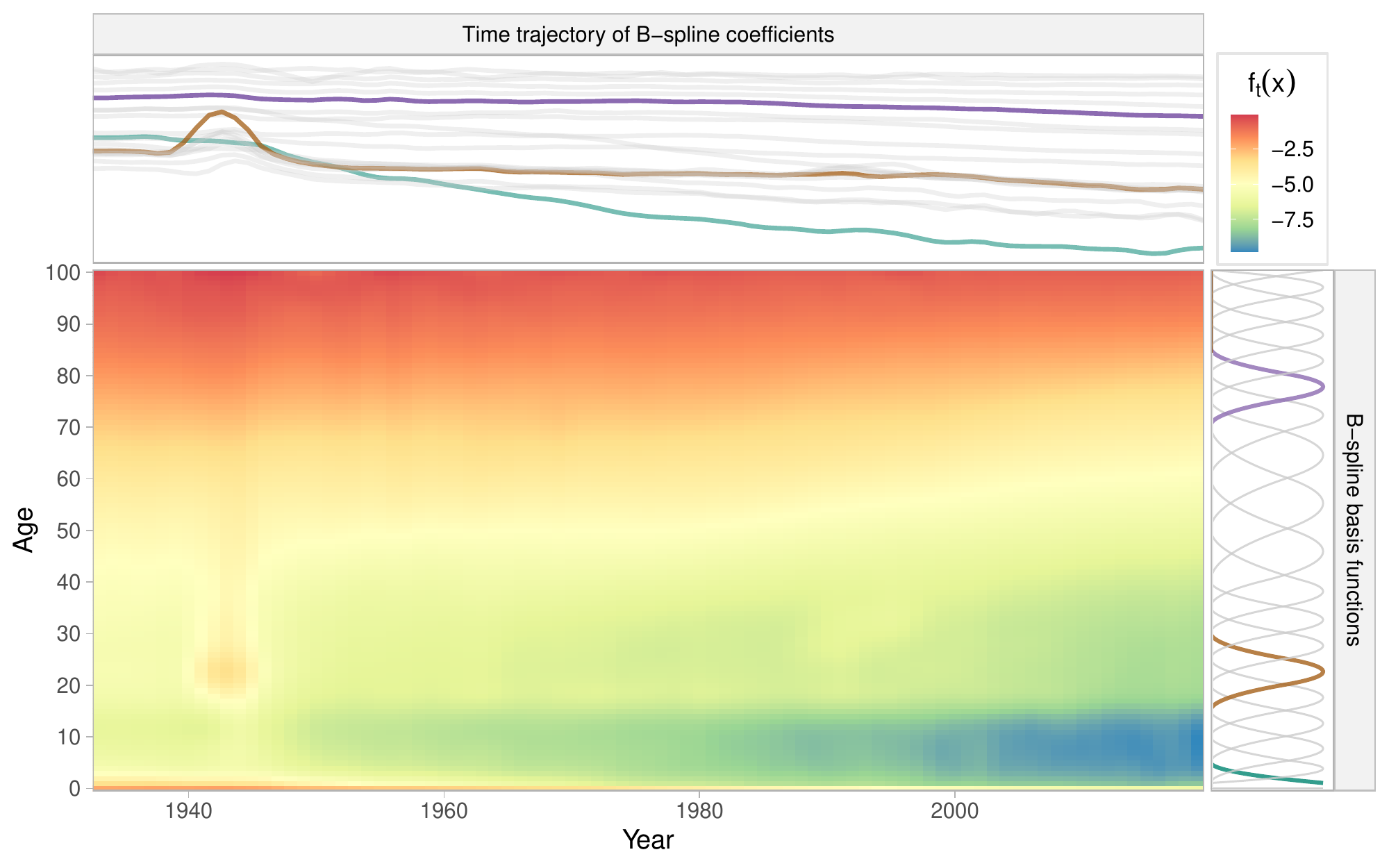}
		\caption{\footnotesize Illustrative example of a heatmap  for $f_t(x)$, as defined in \eqref{eq2} via a linear combination of pre--specified \textsc{b}--spline bases (right--side panel) with coefficients varying across periods according to \eqref{eq3}--\eqref{eq4}  (top--side panel). For illustrative purposes, three \textsc{b}--spline bases and the corresponding coefficients trajectories are highlighted with different colors; see Section~\ref{sec_4} for details on the choice of the number and location of the \textsc{b}--spline bases. Refer to the online article for the color version of this figure.}
	\label{fig:model}
\end{figure}

As illustrated in Figure~\ref{fig:model}, the \textsc{b}--spline process representation in equations \eqref{eq2}--\eqref{eq4} provides an effective formulation which treats $f_t(x)$ as a function of age $x$, through a linear combination of interpretable \textsc{b}--spline bases directly associated to specific age classes, and as a stochastic process of time $t=\lambda \tau$, leveraging a flexible system of stochastic differential equations that jointly characterize the time trajectory of every spline coefficient $\beta_j(t)$, for $j=1, \ldots, p$, by explicitly modeling its smoothness across periods.  In \eqref{eq3}--\eqref{eq4}, such a smoothness is measured by the second--order derivative $\partial^2 \beta_j(t)/\partial^2 t$ which is in turn centered on a higher level time--varying instantaneous mean function $a_j(t)$ that allows local adaptivity. Combining  \eqref{eq2}--\eqref{eq4} with  model \eqref{eq1} yields a unique representation for age--period mortality patterns that (i) accounts for age–specific heterogeneity in death counts via the log--normal assumption in  \eqref{eq1}, (ii) enforces a generally smooth trajectory for the age patterns of mortality through the linear combination of   \textsc{b}--splines in \eqref{eq2}, and (iii) explicitly allows these patterns to evolve in time between periods of rapid and slow variations, affecting age classes with different magnitudes, via the system of stochastic differential equations in  \eqref{eq3}--\eqref{eq4} for the splines coefficients. 

To further clarify representation \eqref{eq3}--\eqref{eq4}, it shall be emphasized that this construction extends the nested Gaussian process of \citet{zhu2013locally} in a number of directions inherently motivated by our focus on modeling and forecasting of mortality rates. In fact, the original formulation by \citet{zhu2013locally} does not consider  \eqref{eq2}, and provides a simpler version of  \eqref{eq3}--\eqref{eq4}  with a focus on inducing locally--varying smoothness in a single trajectory via a Gaussian process   \citep[see e.g.,][]{williams2006gaussian} for a derivative of selected order, which is in turn centered on a higher--level Gaussian process characterizing the local instantaneous mean function. Although considering separate nested Gaussian processes for the trajectories $f_t(x)$ of each age $x \in \mathcal{X}$ is a viable strategy, such a representation  is affected by the choice of the time scale and, more crucially, it fails to borrow information among mortality patterns for contiguous ages. Recalling, e.g., \citet{currie2004smoothing} and \citet{camarda2019smooth} the latter property would be conceptually and practically useful since it is reasonable to expect that the mortality patterns of close age classes display a natural dependence. To this end, equations \eqref{eq2}--\eqref{eq4} extend \citet{zhu2013locally} to a structured multivariate formulation which induces such a borrowing of information through the  \textsc{b}--spline representation of $f_t(x)$ in \eqref{eq2} and by inducing dependence among the \textsc{b}--spline coefficients in \eqref{eq3} via the introduction of correlation between the white noises through the matrix $\bOmega$. This symmetric matrix has unit diagonal, and off--diagonal elements that are non--zero only for the entries $\bOmega_{j,l}$ whose indexes $(j,l)$ are either both even or both odd, so as to induce correlation among the noises associated with the  derivatives and local instantaneous means, respectively. As clarified in Section~\ref{sec_3}, by defining these non--zero correlations via suitable covariance functions \citep[e.g.,][]{williams2006gaussian} allows to enforce a local borrowing of information which decays as the distance between age classes grows. The introduction of the length--scale parameter $\lambda$ allows, instead, to preserve time unit invariance, so that, if the reference time scale is changed it is still possible to retrieve the same model formulation with a suitable specification of the parameters $\lambda$, $\sigma_{\beta}$ and $\sigma_a$. As clarified in  Proposition~\ref{p2} and in the subsequent discussion, this property follows from the fact that the process in \eqref{eq3}--\eqref{eq4} has been defined with respect to a reference time scale $\tau$ related with $t$ via $\tau=t/\lambda$. This modification is in line with similar operations considered in the Gaussian process literature when including a length--scale parameter in popular covariance functions  \citep[][]{williams2006gaussian}. The empirical results in Section~\ref{sec_4} confirm that these extensions yield substantial gains in mortality forecasts relative to those obtained via a direct application of the original nested Gaussian process by \citet{zhu2013locally}  to each trajectory $f_t(x)$, $t \in \mathcal{T}$, separately for every $x$.

Besides including the core age--period structures of mortality, model  \eqref{eq1}--\eqref{eq4} crucially admits a provably accurate Gaussian state--space approximation, as described in Section~\ref{sec_22} below. This representation further clarifies the proposed model and facilitates efficient computation via standard Kalman filter updates; see also Section~\ref{sec_3}, and refer to Section~\ref{sec_4} for details on the choice of the number and location of the \textsc{b}--spline bases.


\subsection{Gaussian State--Space Approximation}\label{sec_22}
As a first step towards the derivation of an accurate and computationally tractable Gaussian state--space representation of the proposed formulation in \eqref{eq1}--\eqref{eq4}, Proposition~\ref{p1} proves that model  \eqref{eq1} induces a distribution on the observed log--mortality rates  $\log m_{xt}=\log (d_{xt}/\textsc{e}_{xt})$ which can be closely approximated, for $\textsc{e}_{xt}$ large enough, by the $\mbox{N}(f_{t}(x), \sigma_{\overline{m}}^2)$ assumed in  \eqref{eq1} for $\log \overline{m}_{xt}$. See Appendix A for proofs.

\begin{Proposition}
Under model \eqref{eq1}, $\log m_{xt}=\log (d_{xt}/\textsc{e}_{xt}) \to \mbox{\normalfont N}(f_{t}(x), \sigma_{\overline{m}}^2)$ in distribution, as $\textsc{e}_{xt} \to~\infty$, for any $x \in \mathcal{X}$ and $t\in \mathcal{T}$.
\label{p1}
\end{Proposition}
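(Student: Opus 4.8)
The plan is to condition on $\overline{m}_{xt}$ and write
\[
\log m_{xt}=\log(d_{xt}/\textsc{e}_{xt})=\log \overline{m}_{xt}+R_{xt},\qquad R_{xt}:=\log\!\big(d_{xt}/(\textsc{e}_{xt}\overline{m}_{xt})\big).
\]
Under \eqref{eq1} the first summand $\log\overline{m}_{xt}\sim\mathrm{N}(f_t(x),\sigma_{\overline{m}}^2)$ has a law that does not depend on $\textsc{e}_{xt}$, so by Slutsky's theorem it is enough to show that the remainder $R_{xt}\to 0$ in probability as $\textsc{e}_{xt}\to\infty$. This is the natural route because it reduces the statement to a law-of-large-numbers behaviour of a Poisson count with diverging mean.

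The second step is the conditional bound. Conditionally on $\overline{m}_{xt}$, which is strictly positive almost surely since $\log\overline{m}_{xt}$ is Gaussian, $d_{xt}$ is $\mathrm{Poisson}(\textsc{e}_{xt}\overline{m}_{xt})$, so $\mathbb{E}[d_{xt}/(\textsc{e}_{xt}\overline{m}_{xt})\mid\overline{m}_{xt}]=1$ and $\mathbb{V}[d_{xt}/(\textsc{e}_{xt}\overline{m}_{xt})\mid\overline{m}_{xt}]=1/(\textsc{e}_{xt}\overline{m}_{xt})$. Fixing $\epsilon>0$ and using continuity of $u\mapsto\log u$ at $u=1$, I would pick $\eta=\eta(\epsilon)\in(0,1)$ with $|u-1|\le\eta\Rightarrow|\log u|\le\epsilon$; Chebyshev's inequality then yields $\mathbb{P}(|R_{xt}|>\epsilon\mid\overline{m}_{xt})\le 1/(\textsc{e}_{xt}\overline{m}_{xt}\eta^2)$. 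The degenerate event $\{d_{xt}=0\}$, on which $\log m_{xt}=-\infty$, is contained in $\{|R_{xt}|>\epsilon\}$ and has conditional probability $e^{-\textsc{e}_{xt}\overline{m}_{xt}}$, so it is absorbed by the same bound.

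The last step --- and the only genuinely delicate point --- is to remove the conditioning. Since the bound above degrades as $\overline{m}_{xt}\downarrow 0$, I would split according to $\{\overline{m}_{xt}>c\}$: given $\delta>0$, choose $c>0$ with $\mathbb{P}(\overline{m}_{xt}\le c)<\delta/2$, which is possible because $\overline{m}_{xt}>0$ almost surely, and then
\[
\mathbb{P}(|R_{xt}|>\epsilon)\le \mathbb{P}(\overline{m}_{xt}\le c)+\mathbb{E}\Big[\mathbf{1}\{\overline{m}_{xt}>c\}\,\tfrac{1}{\textsc{e}_{xt}\,\overline{m}_{xt}\,\eta^2}\Big]\le \frac{\delta}{2}+\frac{1}{\textsc{e}_{xt}\,c\,\eta^2}<\delta
\]
for $\textsc{e}_{xt}$ large enough. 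Hence $R_{xt}\to0$ in probability and the claim follows by Slutsky's theorem. The main obstacle is precisely this non-uniformity of the Poisson concentration over the support of $\overline{m}_{xt}$; the truncation at level $c$ is what makes the unconditional limit go through, while the rest is the standard Poisson law of large numbers combined with the continuous-mapping/Slutsky machinery.
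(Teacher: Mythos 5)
Your proof is correct, but it follows a genuinely different route from the paper's. You decompose $\log m_{xt}=\log \overline{m}_{xt}+R_{xt}$ and prove that the remainder $R_{xt}$ converges to zero in probability \emph{unconditionally}, via a conditional Chebyshev bound combined with a truncation at $\{\overline{m}_{xt}>c\}$ to control the non-uniformity of the Poisson concentration as $\overline{m}_{xt}\downarrow 0$; Slutsky's theorem then finishes the argument on the log scale directly. The paper instead works at the level of distribution functions: conditionally on $\overline{m}_{xt}$ it establishes mean-square (hence in-probability, hence in-distribution) convergence of $m_{xt}=d_{xt}/\textsc{e}_{xt}$ to $\overline{m}_{xt}$, integrates the conditional CDF via the dominated convergence theorem (with dominating function $1$) to obtain convergence in distribution of $m_{xt}$ to the log--normal law of $\overline{m}_{xt}$, and only then applies the continuous mapping theorem with $u\mapsto\log u$. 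The paper's route buys simplicity in the deconditioning step --- pointwise convergence of the conditional CDF plus DCT absorbs the behaviour at small $\overline{m}_{xt}$ automatically, with no truncation needed --- at the price of invoking continuous mapping for a function that is discontinuous at $0$ (harmless, since the log--normal limit puts no mass there). Your route is more explicit and quantitative, yields a rate-flavoured bound, handles the $\{d_{xt}=0\}$ degeneracy head-on, and avoids the continuous mapping step altogether; the truncation argument is exactly the extra work this costs. Both proofs are complete and valid.
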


Proposition~\ref{p1} motivates direct focus on the observed log--mortality rates $\log m_{xt}$, which are of overarching interest in state--of--the--art studies \citep[see, e.g.,][]{land1986methods,booth2008mortality,currie2016fitting,hunt2021structure}. In addition, it justifies the adoption of the Gaussian regression model $\log m_{xt} = f_{t}(x) + \nu_{xt}$, with \smash{$\nu_{xt} \stackrel{\mbox{\scriptsize i.i.d.}}{\sim} \mbox{N}(0, \sigma_{\overline{m}}^2)$}  for $x \in \mathcal{X}$, $t\in \mathcal{T}$, and  $f_{t}(x)$ as in  \eqref{eq2}, which is  more tractable than the Poisson log--normal representation for the death counts in \eqref{eq1}. Recalling Proposition~\ref{p1}, this approximation is provably accurate  in settings with  large enough $\textsc{e}_{xt}$. This is a common situation in mortality studies by country, where $\textsc{e}_{xt}$ is typically in the order of tens--to--hundreds of thousands.

Although Proposition~\ref{p1} yields a simpler construction, to obtain a fully tractable formulation it is  necessary to derive an alternative representation for the stochastic differential equations in \eqref{eq3}--\eqref{eq4} which is amenable to efficient computation, direct forecasting, interpretable inference and principled uncertainty quantification. Proposition~\ref{p2} proves that, when observed at a finite collection of times $t_1, \ldots,t_n$, as in our mortality--data context, equations \eqref{eq3}--\eqref{eq4} admit a tractable representation via a linear system of Gaussian state equations.

\begin{Proposition}
Let $\bb_{t_s}$ denote the realization at a generic time $t_s$ of the process~$\bb(t)$ defined in Section~\ref{sec_21}, i.e.,  $\bb_{t_s}=[\beta_1(t),\partial \beta_1(t)/\partial t, a_1(t), \ldots, \beta_p(t),\partial \beta_p(t)/\partial t, a_p(t)]_{| t=t_s}^{\intercal}$. Then, for each finite grid of times $t_s=t_1, \ldots,t_n$, with $t_1 < \cdots < t_n$, the system of stochastic differential equations in \eqref{eq3}--\eqref{eq4} admits the Gaussian state--equation representation
\begin{eqnarray}
\bb_{t_{s+1}}&=& \bT_{t_s} \bb_{t_s}+\boeta_{t_s}, \quad \boeta_{t_s} \stackrel{\mbox{\scriptsize \normalfont ind}}{\sim} \mbox{\normalfont N}_{3p}({\bf 0}, \bQ_{t_s}), \qquad \mbox{\normalfont for } \ t_s=t_1, \ldots,t_n,
\label{eq5}
\end{eqnarray}
where $\bT_{t_s}$ denotes a $3p \times 3p$ block--diagonal transition matrix defined as
\begin{eqnarray}
\scalemath{1}{\bT_{t_s}=\setlength\arraycolsep{5pt} {\bf I}_p \otimes\begin{bmatrix}
1 & \lambda\delta_s & \lambda^2(\delta_s^2/2) \\
0 & 1 & \lambda\delta_s \\
0 & 0 & 1
\end{bmatrix},}
\label{eq6}
\end{eqnarray}
with $\delta_s=(t_{s{+}1}- t_{s})/\lambda$, while $ \bQ_{t_s}$ is a $3p \times 3p$ covariance matrix having generic  block 
\begin{equation}
\scalemath{1}{\bQ_{t_s[j,l]}=\setlength\arraycolsep{5pt} \sigma^2_\beta \rho_{\beta[j,l]}  \begin{bmatrix}
(\delta^3_s/3)\lambda^2  & 
(\delta^2_s/2)\lambda & 0 \\
(\delta^2_s/2)\lambda &
\delta_s&0\\
0 & 0  & 0
\end{bmatrix}+\sigma^2_a \rho_{a[j,l]}\begin{bmatrix}
(\delta^5_s/20)\lambda^4  & 
(\delta^4_s/8)\lambda^3 & (\delta^3_s/6)\lambda^2 \\
(\delta^4_s/8)\lambda^3 &
(\delta^3_s/3)\lambda^2&(\delta^2_s/2)\lambda\\
(\delta^3_s/6)\lambda^2& (\delta^2_s/2)\lambda  & \delta_s
\end{bmatrix},}
\label{eq7}
\end{equation}
for each $j=1, \ldots, p$ and $l=1, \ldots, p$, with $\rho_{\beta[j,l]}$ and $\rho_{a[j,l]}$ denoting  those entries of $\bOmega$ in~\eqref{eq3} that measure the correlation among the derivatives of the coefficients associated with splines $j$ and $l$, and between the corresponding local instantaneous means, respectively.
\label{p2}
\end{Proposition}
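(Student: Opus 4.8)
The plan is to read \eqref{eq3}--\eqref{eq4} as a linear, time--homogeneous Itô stochastic differential equation in the reference time $\tau=t/\lambda$, to solve it in closed form over each interval $[\tau_s,\tau_{s+1}]$ by variation of constants, and then to identify the resulting transition operator with a matrix exponential and the innovation covariance with an Itô--isometry integral. First I would set $\tilde{\bb}(\tau):=\bb(\lambda\tau)$ and rewrite \eqref{eq3}--\eqref{eq4} as the $(3p\times 1)$ linear SDE $d\tilde{\bb}(\tau)=\bA\,\tilde{\bb}(\tau)\,d\tau+\bB\,d\bW_\tau$, with $\bA:=\lambda(\bI_p\otimes\bC)$, $\bB:=(\bI_p\otimes\bD)\,\bOmega^{1/2}$, and $\bW_\tau$ a $(2p\times 1)$ standard Brownian motion whose formal derivative is the white--noise vector $\bvarepsilon_\tau$; taking the symmetric square root of the correlation matrix $\bOmega$ gives $\bB\bB^\intercal=(\bI_p\otimes\bD)\,\bOmega\,(\bI_p\otimes\bD)^\intercal$. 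Standard theory for linear SDEs then yields, for consecutive grid points and with $\delta_s=\tau_{s+1}-\tau_s=(t_{s+1}-t_s)/\lambda$,
\[
\tilde{\bb}(\tau_{s+1})=e^{\bA\delta_s}\,\tilde{\bb}(\tau_s)+\int_{0}^{\delta_s}e^{\bA(\delta_s-u)}\,\bB\,d\bW_u .
\]
Since $\tilde{\bb}(\tau_s)=\bb_{t_s}$ and the stochastic integral is a centred Gaussian vector that is independent of the past and independent across disjoint intervals, this already gives \eqref{eq5} with $\bT_{t_s}=e^{\bA\delta_s}$ and, by the Itô isometry and the substitution $u\mapsto\delta_s-u$, with $\bQ_{t_s}=\int_{0}^{\delta_s}e^{\bA u}\,\bB\bB^\intercal\,e^{\bA^\intercal u}\,du$.

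It then remains to evaluate $\bT_{t_s}$ and $\bQ_{t_s}$ explicitly. For the transition matrix I would use $e^{\bI_p\otimes\bM}=\bI_p\otimes e^{\bM}$ together with the nilpotency $\bC^3=\bzero$, so that $e^{\lambda\delta_s\bC}=\bI_3+\lambda\delta_s\,\bC+\tfrac{1}{2}(\lambda\delta_s)^2\,\bC^2$; this is precisely the $3\times 3$ block appearing in \eqref{eq6}, and the block--diagonal form of $\bT_{t_s}$ reflects that the drift $\bA$ does not couple the $p$ triples $(\beta_j,\partial\beta_j/\partial t,a_j)$. For the covariance I would first determine the block structure of $\bB\bB^\intercal$: its $(j,l)$ block (of size $3\times 3$) equals $\bD\,\bOmega_{[j,l]}\,\bD^\intercal$, where $\bOmega_{[j,l]}$ is the $2\times 2$ block of $\bOmega$ indexing $(\varepsilon_{\beta_j},\varepsilon_{a_j})$ against $(\varepsilon_{\beta_l},\varepsilon_{a_l})$. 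By the assumed zero pattern of $\bOmega$ (off--diagonal entries vanish unless both indices are even or both odd), $\bOmega_{[j,l]}$ is diagonal with entries $\rho_{\beta[j,l]}$ and $\rho_{a[j,l]}$, equal to $1$ for $j=l$; hence $\bD\,\bOmega_{[j,l]}\,\bD^\intercal=\sigma_\beta^2\,\rho_{\beta[j,l]}\,\mathbf{e}_2\mathbf{e}_2^\intercal+\sigma_a^2\,\rho_{a[j,l]}\,\mathbf{e}_3\mathbf{e}_3^\intercal$, with $\mathbf{e}_i$ the $i$--th canonical basis vector of $\R^3$.

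Finally, because $e^{\bA u}=\bI_p\otimes e^{\lambda u\,\bC}$, the $(j,l)$ block of $\bQ_{t_s}$ separates as $\int_0^{\delta_s}e^{\lambda u\bC}\big(\sigma_\beta^2\rho_{\beta[j,l]}\,\mathbf{e}_2\mathbf{e}_2^\intercal+\sigma_a^2\rho_{a[j,l]}\,\mathbf{e}_3\mathbf{e}_3^\intercal\big)\,e^{\lambda u\bC^\intercal}\,du$. Using $e^{\lambda u\bC}\mathbf{e}_2=(\lambda u,1,0)^\intercal$ and $e^{\lambda u\bC}\mathbf{e}_3=(\tfrac{1}{2}\lambda^2 u^2,\lambda u,1)^\intercal$, each summand is a rank--one matrix whose entries are monomials in $u$; integrating these monomials over $[0,\delta_s]$ and collecting powers of $\lambda$ reproduces exactly the two $3\times 3$ matrices in \eqref{eq7}, and superposing the contributions over all $(j,l)$ gives $\bQ_{t_s}$. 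The only genuinely delicate point is the first step: passing rigorously from the formal white--noise equation \eqref{eq3}--\eqref{eq4} to its Itô form and invoking the linear--SDE solution formula and the Itô isometry, while keeping careful track of the rescaling $\tau=t/\lambda$ so that it is $\delta_s=(t_{s+1}-t_s)/\lambda$ --- and not $t_{s+1}-t_s$ --- that enters \eqref{eq6}--\eqref{eq7}; the matrix exponential and the elementary integrals are then routine. A univariate instance of this calculation appears in \citet[][A.6]{zhu2013locally}, and the present argument is its Kronecker--structured, noise--correlated extension.
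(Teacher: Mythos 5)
Your proof is correct and takes essentially the same route as the paper, whose own ``proof'' simply defers to the univariate derivation in Appendix A.6 of \citet{zhu2013locally}: you have written out in full the linear--SDE discretization, the matrix exponential of the nilpotent drift, and the It\^o--isometry computation of the innovation covariance, making explicit the Kronecker/block structure and the role of the diagonal $2\times 2$ blocks of $\bOmega$ that the paper leaves implicit. All the explicit calculations (the columns of $e^{\lambda u \bC}$, the rank--one decomposition of $\bD\,\bOmega_{[j,l]}\,\bD^{\intercal}$, and the moment integrals) check out against \eqref{eq6}--\eqref{eq7}.
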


Proposition~\ref{p2} yields a simple state--space representation expressing the value of $\beta_j(t_{s+1})$ at time $t_{s+1}$ via a second--order stochastic Taylor expansion of the trajectory $\beta_j(t)$ around the previous time point $t_{s}$, for each $j=1, \ldots, p$; see the form of  $\bT_{t_s}$ in \eqref{eq6}. This allows to explicitly model and forecast not only the class--specific mortality trends encoded in $\beta_1(t), \ldots, \beta_p(t)$, but also the associated rates of change measured by $\partial \beta_1(t)/\partial t, \ldots, \partial \beta_p(t)/\partial t$ and the corresponding instantaneous means $a_1(t), \ldots, a_p(t)$. Recalling the discussion in Section~\ref{sec_21}, notice that the specification of the actual values of $t_s=t_1, \ldots,t_n$ is not necessary in    equations \eqref{eq5}--\eqref{eq7}, which only require to pre--specify the time lags $\delta_s$ among consecutive observations under the reference scale $t/\lambda$. Changing such a scale, and hence the lags $\delta_s$, does not alter the model, provided that it is possibile to modify $\lambda$, $\sigma^2_\beta$ and $\sigma^2_a$ accordingly in order to preserve the same state equations in $t$. This is not possibile under the state equations displayed in Section 3 of \citet{zhu2013locally}. In our equally--spaced context we  set lags $\delta_s$ to $1$, and then learn the appropriate scaling  $\lambda$ and variances via maximum likelihood under \eqref{eq5}--\eqref{eq7}.

The state equations in \eqref{eq5}--\eqref{eq7} are both flexible and interpretable, and further allow to borrow information across the coefficients of different \textsc{b}--splines via the covariance matrix $\bQ_{t_s}$ of the noise $ \boeta_{t_s}$;  see \eqref{eq7}. The core parameters regulating the strength of such a dependence are $\rho_{\beta[j,l]}$ and $\rho_{a[j,l]}$,  for $j=1, \ldots, p$ and $l=1, \ldots, p$. Letting $\rho_{\beta[j,l]}=\mathbbm{1}(j=l)$ and $\rho_{a[j,l]}=\mathbbm{1}(j=l)$ yields no borrowing of information and, as a consequence, separate state equations for each \textsc{b}--spline coefficient. Conversely, whenever $\rho_{\beta[j,l]} \in (0{,}1]$ and $\rho_{a[j,l]} \in (0{,}1]$, the $j$--th and $l$--th splines display a dependence in the trajectories of the associated coefficients. More specifically, large values of $\rho_{\beta[j,l]}$ and $\rho_{a[j,l]}$ imply high correlation between the first derivatives and local instantaneous means functions, respectively, of the coefficient trajectories for splines $j$ and $l$. This allows to borrow information in terms of  both  the overall trend and smoothness, while inducing dependence among the actual trajectories $\beta_j(t)$ and $\beta_l(t)$ under \eqref{eq5}--\eqref{eq7}.

Recalling the above discussion and extending related ideas from \textsc{p}--splines representations \citep{eliers1996flexible,lang2004bayesian}, we define $\rho_{\beta[j,l]}$ and $\rho_{a[j,l]}$ to induce a local borrowing  of information whose strength decays with a suitable distance between the  $j$--th and $l$--th splines. More specifically, let $\bar{x}_j$ and $\bar{x}_l$ denote the ages at which the \textsc{b}--spline functions $g_j(x)$ and $g_l(x)$ are maximized, respectively, we define
\begin{eqnarray}
\rho_{\beta[j,l]}=\mathcal{K}(\bar{x}_j,\bar{x}_l; \bgamma_\beta), \qquad \mbox{and} \qquad \rho_{a[j,l]}=\mathcal{K}(\bar{x}_j,\bar{x}_l; \bgamma_a),
\label{eq8}
\end{eqnarray}
for every spline $j=1, \ldots, p$ and $l=1, \ldots, p$, where $\mathcal{K}(\bar{x}_j,\bar{x}_l; \bgamma_{\beta})$ and $\mathcal{K}(\bar{x}_j,\bar{x}_l; \bgamma_a)$ denote user--selected covariance functions  \citep[e.g.,][Ch.\ 4]{williams2006gaussian}, which decay to zero as $|\bar{x}_j-\bar{x}_l|$ grows, and are defined such that $\mathcal{K}(\bar{x}_j,\bar{x}_j; \bgamma_\beta)=\mathcal{K}(\bar{x}_j,\bar{x}_j; \bgamma_a)=1$ for $j=1, \ldots, p$. As a consequence, the time patterns of mortality are allowed to effectively share local information across contiguous age classes, and the strength of this dependence progressively decreases for  distant ages with a pattern which depends on the selected covariance functions and on the associated parameters $\bgamma_\beta$ and $\bgamma_a$. Routinely--implemented examples of covariance functions are the squared exponential and the Mat\'ern, among others \citep[see e.g.,][Ch.\ 4]{williams2006gaussian}; see Section~\ref{sec_4} for details on suitable specifications of these covariance functions and the  corresponding parameters in the mortality data context.

Combining Propositions~\ref{p1} and \ref{p2} yields the tractable and accurate Gaussian state--space approximation for the observed log--mortality rates 
\begin{eqnarray}
\log \bm_{t_s}&=&\bZ_{t_s}\bb_{t_s} + \bnu_{t_s}, \qquad \bnu_{t_s} \stackrel{\mbox{\scriptsize \normalfont ind}}{\sim} \mbox{\normalfont N}_{k}({\bf 0}, \bH_{t_s}), \label{eq9}\\
\bb_{t_{s+1}}&=& \bT_{t_s} \bb_{t_s}+\boeta_{t_s}, \qquad \boeta_{t_s} \stackrel{\mbox{\scriptsize \normalfont ind}}{\sim} \mbox{\normalfont N}_{3p}({\bf 0}, \bQ_{t_s}), 
\label{eq10}
\end{eqnarray}
for every time $t_s=t_1, \ldots,t_n$, where $\log \bm_{t_s}=(\log m_{x_1,t_s}, \ldots, \log m_{x_k,t_s})^{\intercal}$   is the $(k \times 1)$--dimensional vector of the log--mortality rates observed  for ages $x_1, \ldots, x_k$ at time $t_s$, $\bZ_{t_s}=[\bg_1{,}{\bf 0}, {\bf 0}, \bg_2{,}{\bf 0}, {\bf 0}, \ldots, \bg_p{,}{\bf 0}, {\bf 0}]$ denotes the $(k \times 3p)$--dimensional design matrix with non--zero columns $\bg_j=[g_j(x_1), \ldots, g_j(x_k)]^{\intercal}$, $j=1, \ldots, p$ comprising the values of the pre--selected \textsc{b}--splines bases at the observed ages $x_1, \ldots, x_k$, and $\bH_{t_s}=\sigma_{\overline{m}}^2{\bf I}_k$, whereas $\bb_{t_s}$, $\bT_{t_s}$ and $\bQ_{t_s}$ are defined as in Proposition~\ref{p2}.

As clarified in Section~\ref{sec_3}, the above Gaussian state--space approximation allows closed--form filtering, smoothing and forecasting via simple recursive equations obtained from direct application of the classical Kalman filter updates  \citep{kalman1960new,koopman2000fast}; see also \citet{durbin2012time} and \citet{chopin2020introduction} for a general treatment of the Kalman filter and smoother in linear Gaussian state--space models, and refer to the \texttt{R} package \texttt{KFAS} \citep{helske2017kfas} for an effective implementation. This  tractability is in contrast with recently--proposed flexible mortality models which require \textsc{mcmc} routines \citep[e.g.,][]{wong2018bayesian,alexopoulos2019bayesian} and, unlike for state--of--the--art formulations discussed in Section~\ref{sec_1}, model \eqref{eq9}--\eqref{eq10} holds not only for equally--spaced time grids $t_1, \ldots,t_n$ but also for unequally--spaced ones. Such a generality is conceptually and practically useful in allowing inference and forecasting for different time horizons, which is of interest, for example, during periods of mortality shocks to rapidly revise forecasts in the short term, e.g., within months, trimesters or semesters.

\section{Filtering, Smoothing and Forecasting}\label{sec_3}
In Section~\ref{sec_31} we leverage the model in \eqref{eq9}--\eqref{eq10} to derive tractable strategies for probabilistic inference and prediction of the coefficients vector $\bb_{t_s}$ defined in Proposition~\ref{p2} and, as a direct consequence, of  the $(k \times 1)$--dimensional log--mortality rates mean vector $\bef_{t_s}$ defined as $\bef_{t_s}=\bZ_{t_s}\bb_{t_s}=[f_{t_s}(x_1), \ldots, f_{t_s}(x_k)]^{\intercal}$, with each $f_{t}(x)$ as in equation~\eqref{eq2}. To this end, we employ the classical Kalman filter \citep{kalman1960new} under the model in \eqref{eq9}--\eqref{eq10}  to obtain simple and closed--form recursive formulas for the filtering $p(\bb_{t_s} \mid \log \bm_{t_1}, \ldots, \log \bm_{t_s})$, predictive $p(\bb_{t_{s+1}} \mid \log \bm_{t_1}, \ldots, \log \bm_{t_s})$, and smoothing $p(\bb_{t_s} \mid \log \bm_{t_1}, \ldots, \log \bm_{t_n})$ distributions of $\bb_{t_s}$, for $t_s=t_1, \ldots,t_n$. Since $\bef_{t_s}=\bZ_{t_s}\bb_{t_s}$, with $\bZ_{t_s}$ known, the filtering, predictive and smoothing distributions for $\bef_{t_s}$ can be directly derived from those of $\bb_{t_s}$, for every $t_s=t_1, \ldots,t_n$. Leveraging these results, we further develop in Section~\ref{sec_32} a modern version of the celebrated  \citet{lee1992modeling} approach. Our proposed strategy provides future probabilistic projections of the \textsc{b}--splines coefficients via a simple random walk plus drift model where the drift component  exploits the possibility of our formulation to explicitly learn not only mortality levels but also the corresponding rates of change. As illustrated in Section~\ref{sec_4}, this solution yields improved probabilistic forecasts of log--mortality rates relative to state--of--the--art alternatives.

The above formulas require  knowledge of the parameters $\sigma_{\overline{m}}^2$, $\sigma_{\beta}^2$, $\sigma_{a}^2$ and $\lambda$. Thanks to the Gaussian form of the model in \eqref{eq9}--\eqref{eq10}, these quantities can be estimated via maximization of the marginal likelihood for the Gaussian vectors $(\log \bm_{t_1}, \ldots, \log \bm_{t_n})$, which is available in closed form, thereby allowing direct estimation; see \citet[][Ch. 7]{durbin2012time} and  \citet[][Ch. 7]{chopin2020introduction} for further details on maximum likelihood estimation of the system parameters in Gaussian state--space models, and refer to the \texttt{R} package \texttt{KFAS} \citep{helske2017kfas} for an effective implementation. As discussed in the tutorial code at \url{https://github.com/fpavone/BSP-mortality}, in practice it is often recommended to add suitable penalizations and initialize  the estimation procedure at different starting points, selecting as final estimate the one that yields the highest marginal likelihood, thereby avoiding possible issues associated with local modes. The covariance function parameters $\bgamma_\beta$ and $\bgamma_a$ in \eqref{eq8} are instead fixed at default values that allow to induce a suitable borrowing of information across spline coefficients; see Section~\ref{sec_4} for details.

\subsection{Filtering, Predictive and Smoothing Distributions}\label{sec_31}
Due to the Gaussian form of  model  \eqref{eq9}--\eqref{eq10}, the filtering, predictive and smoothing distributions for  $\bb_{t_s}$ are multivariate normals $\mbox{N}_{3p}(\bmu_{t_s|t_{1:s}}, \bSigma_{t_s|t_{1:s}})$, $\mbox{N}_{3p}(\bmu_{t_{s+1}|t_{1:s}}, \bSigma_{t_{s+1}|t_{1:s}})$ and $\mbox{N}_{3p}(\bmu_{t_s|t_{1:n}}, \bSigma_{t_s|t_{1:n}})$, respectively, with the mean vectors and covariance matrices that can be derived sequentially in time via recursive equations \citep{kalman1960new}. More specifically, let $\bmu_{t_{s}|t_{1:s-1}}$ and $ \bSigma_{t_{s}|t_{1:s-1}}$ be the predictive mean vector and covariance matrix for  $\bb_{t_s}$ given the log--mortality rates  observed until time $t_{s-1}$. Then, recalling, e.g.,  \citet[][Ch.\ 4]{durbin2012time},  the filtering distribution for $\bb_{t_s}$ is a $3p$--variate Gaussian with mean vector $\bmu_{t_{s}|t_{1:s}}$ and covariance matrix $\bSigma_{t_{s}|t_{1:s}}$ equal to
\begin{eqnarray}
\begin{split}
\qquad \bmu_{t_{s}|t_{1:s}}&=\bmu_{t_{s}|t_{1:s-1}}+ \bSigma_{t_{s}|t_{1:s-1}}\bZ_{t_s}^{\intercal}(\bZ_{t_s} \bSigma_{t_{s}|t_{1:s-1}}\bZ_{t_s}^{\intercal}+\bH_{t_s})^{-1}(\log \bm_{t_s}-\bZ_{t_s}\bmu_{t_{s}|t_{1:s-1}}),\\
\bSigma_{t_{s}|t_{1:s}}&= \bSigma_{t_{s}|t_{1:s-1}}-\bSigma_{t_{s}|t_{1:s-1}}\bZ_{t_s}^{\intercal}(\bZ_{t_s} \bSigma_{t_{s}|t_{1:s-1}}\bZ_{t_s}^{\intercal}+\bH_{t_s})^{-1}\bZ_{t_s}\bSigma_{t_{s}|t_{1:s-1}}.
\end{split}
\label{eq11}
\end{eqnarray}
The above results are a direct consequence of the closure under conditioning of multivariate Gaussians and, when combined with \eqref{eq10}, directly yield the mean vector $\bmu_{t_{s+1}|t_{1:s}}$ and the covariance matrix $\bSigma_{t_{s+1}|t_{1:s}}$ for the predictive Gaussian distribution of $\bb_{t_{s+1}}$. More specifically, leveraging the closure under linear combinations  of multivariate Gaussians, we obtain
\begin{eqnarray}
\begin{split}
\bmu_{t_{s+1}|t_{1:s}}&=\bT_{t_{s}}\bmu_{t_{s}|t_{1:s}},\\
\bSigma_{t_{s+1}|t_{1:s}}&= \bT_{t_{s}}\bSigma_{t_{s}|t_{1:s}}\bT^{\intercal}_{t_{s}}+\bQ_{t_s}.
\end{split}
\label{eq12}
\end{eqnarray}
Equations \eqref{eq11}--\eqref{eq12} provide simple closed--form formulas that allow to filter and forecast $\bb_{t_s}$ recursively from time $t_1$ until time $t_n$ by iterating among filtering and one--step--ahead predictive steps. Recalling, e.g.,  \citet[][Ch.\ 4]{durbin2012time}, such a recursion is initialized at $t_1$ from a $\mbox{N}_{3p}(\bmu_{t_{1}|t_{0}}, \bSigma_{t_{1}|t_{0}})$. Although several starting strategies can be considered  \citep[see e.g.,][]{durbin2012time}, we rely on a data--driven approach and  fix $\bmu_{t_{1}|t_{0}}$ at a frequentist estimate based on a simple spline regression, while  $\bSigma_{t_{1}|t_{0}}$ is set to $10{\bf I}_{3p}$ to induce a relatively diffuse initialization.
 
The forward recursions in  \eqref{eq11}--\eqref{eq12} can be also combined with backward iterations to derive the mean vector $\bmu_{t_{s}|t_{1:n}}$ and covariance matrix $\bSigma_{t_{s}|t_{1:n}}$ of the Gaussian smoothing distribution for each $t_s$. This is obtained by iterating backward in time from $t_n$ to $t_1$ via the expressions
\begin{eqnarray}
\begin{split}
\bmu_{t_{s}|t_{1:n}}&=\bmu_{t_{s}|t_{1:s-1}}+\bSigma_{t_{s}|t_{1:s-1}}\br_{t_{s-1}},\\
\bSigma_{t_{s}|t_{1:n}}&= \bSigma_{t_{s}|t_{1:s-1}}-\bSigma_{t_{s}|t_{1:s-1}}\bV_{t_{s-1}}\bSigma_{t_{s}|t_{1:s-1}},
\end{split}
\label{eq13}
\end{eqnarray}
where $\br_{t_{s-1}}$ and $\bV_{t_{s-1}}$ are obtained from the backward equations $\br_{t_{s-1}}=\bZ^{\intercal}_{t_s}(\bZ_{t_s} \bSigma_{t_{s}|t_{1:s-1}}\bZ_{t_s}^{\intercal}+\bH_{t_s})^{-1}(\log \bm_{t_s}-\bZ_{t_s}\bmu_{t_{s}|t_{1:s-1}})+\bL^{\intercal}_{t_s}\br_{t_{s}}$ and $\bV_{t_{s-1}}=\bZ^{\intercal}_{t_s}(\bZ_{t_s} \bSigma_{t_{s}|t_{1:s-1}}\bZ_{t_s}^{\intercal}+\bH_{t_s})^{-1}\bZ_{t_s}+\bL^{\intercal}_{t_s}\bV_{t_{s}}\bL_{t_s}$, with $\bL_{t_s}=\bT_{t_s}-\bT_{t_s} \bSigma_{t_{s}|t_{1:s-1}} \bZ^{\intercal}_{t_s}(\bZ_{t_s} \bSigma_{t_{s}|t_{1:s-1}}\bZ_{t_s}^{\intercal}+\bH_{t_s})^{-1} \bZ_{t_s}$, and initialization $\br_{t_{n}}={\bf 0}$ and $\bV_{t_{n}}={\bf 0}_{3p \times 3p}$; see \citet[][Ch.\ 4.4]{durbin2012time} for  a detailed derivation of \eqref{eq13} leveraging again standard properties of multivariate Gaussian distributions.

Since $\bef_{t_s}=\bZ_{t_s}\bb_{t_s}$, it immediately follows that the filtering, predictive and smoothing distributions for the log--mortality rates mean function $f_{t_s}(x)$, at $x=x_1, \ldots, x_k$, can be directly obtained from those of $\bb_{t_s}$ in equations~\eqref{eq11}, \eqref{eq12} and \eqref{eq13}, respectively. This implies
\begin{eqnarray}
\begin{split}
(\bef_{t_s} \mid \log \bm_{t_1}, \ldots, \log \bm_{t_s}) &\sim \mbox{N}_{k}(\bZ_{t_s}\bmu_{t_{s}|t_{1:s}}, \bZ_{t_s}\bSigma_{t_{s}|t_{1:s}}\bZ^{\intercal}_{t_s}),\\
(\bef_{t_{s+1}} \mid \log \bm_{t_1}, \ldots, \log \bm_{t_s}) &\sim \mbox{N}_{k}(\bZ_{t_{s+1}}\bmu_{t_{s+1}|t_{1:s}}, \bZ_{t_{s+1}}\bSigma_{t_{s+1}|t_{1:s}}\bZ^{\intercal}_{t_{s+1}}),\\
(\bef_{t_s} \mid \log \bm_{t_1}, \ldots, \log \bm_{t_n}) &\sim \mbox{N}_{k}(\bZ_{t_s}\bmu_{t_{s}|t_{1:n}}, \bZ_{t_s}\bSigma_{t_{s}|t_{1:n}}\bZ^{\intercal}_{t_s}),
\end{split}
\label{eq14}
\end{eqnarray}
for each $t_s=t_1, \ldots, t_n$. These results yield closed--form Gaussian distributions that facilitate probabilistic inference on mortality levels and the corresponding rates of change across ages and periods, beyond currently--available analyses.  Moreover, as clarified in Section~\ref{sec_32}, this perspective allows to improve point forecasts and predictive intervals of  future log--mortality rates at different  times. Crucially, these quantities  can be readily computed via user--friendly and optimized \texttt{R} packages for state--space models. For example, the filtering, predictive and smoothing distributions in \eqref{eq11}--\eqref{eq14} can be obtained via the  \texttt{KFS} function from the  \texttt{KFAS} package \citep{helske2017kfas}, after specifying  model \eqref{eq9}--\eqref{eq10} via the function \texttt{SSModel}.

\subsection{Forecasting}\label{sec_32}
As is clear from equation  \eqref{eq9}, the results in Section~\ref{sec_31} are useful not only for inference, but also to obtain probabilistic forecasts for the vector of future log--mortality rates  $\log \bm_{t_{s^*}}=(\log m_{x_1,t_{s^*}}, \ldots, \log m_{x_k,t_{s^*}})^{\intercal}$ for $t_{s^*}=t_{n+1}, \ldots, t_{n+n^*}$, from the predictive distribution of $\bef_{t_s}$. While such an approach is expected to yield accurate results for short--term forecasts, the quantitative studies in Section~\ref{sec_4} suggest that the inherent local adaptivity of the model developed in Section~\ref{sec_2} might yield less stable and shock--robust mortality projections for those medium--to--large time horizons that are of interest in demography. 

To address this aspect and deliver accurate probabilistic forecasts at longer time horizons, we derive a simple, yet effective, strategy that combines the proposed \textsc{b}--spline construction in Section~\ref{sec_2} with the  celebrated random walk plus drift projections by  \citet{lee1992modeling}, in order to obtain an improvement in log--mortality rates forecasts relative to state--of--the--art competitors. Despite its simplicity, the random walk plus drift construction is empirically supported by the globally--linear trend of log--mortality rates in medium--to--large time horizons, which make  \citet{lee1992modeling} projections still competitive. Nonetheless, as mentioned in e.g.,  \citet{hyndman2007robust}, these  forecasts still rely on an age–period bilinear formulation that fails to account for heterogeneity in age--specific mortality dynamics, and, in addition, the estimation of the drift component is not robust to mortality shocks. The model we propose in Section~\ref{sec_2} accounts for both effects, thus suggesting that incorporating the random walk plus drift forecasting strategy within the proposed  \textsc{b}--spline process with locally--adaptive dynamic coefficients would yield improvements over the original \citet{lee1992modeling} strategy and, as clarified in Section~\ref{sec_4}, also with respect to other state--of--the--art  methods, both in terms of point forecasts and calibration of the predictive intervals.

Consistent with the above discussion, and recalling equations \eqref{eq2} and \eqref{eq9}, we obtain point forecasts for  the future log--mortality rates $\log \bm_{t_{s^*}}$, via
 \begin{eqnarray}
\hat{\bef}_{t_{s^*}}=[\bg_1, \ldots, \bg_p] \hat{\bbeta}_{t_{s^*}} \quad \mbox{for} \ \ t_{s^*}=t_{n+1}, \ldots, t_{n+n^*},
\label{eq15}
\end{eqnarray}
where  $[\bg_1, \ldots, \bg_p] $ corresponds to the $(k \times p)$--dimensional \textsc{b}--splines matrix having columns \smash{$\bg_j=[g_j(x_1), \ldots, g_j(x_k)]^{\intercal}$}, $j=1, \ldots, p$, whereas \smash{$\hat{\bbeta}_{t_{s^*}}=[\hat{\beta}_1(t_{s^*}), \ldots, \hat{\beta}_p(t_{s^*})]^{\intercal}$} is the $(p \times 1)$--dimensional vector comprising the forecasts for the \textsc{b}--splines coefficients at $t_{s^*}>t_n$ from the $p$--variate random walk plus drift model
 \begin{eqnarray}
 \begin{split}
\bbeta_{t_{s^*+1}} &= \bbeta_{t_{s^*}} +\hat{\lambda} \delta_{s^*}\bDelta_{t_{s^*}}+\bomega_{t_{s^*}},  \qquad \bomega_{t_{s^*}} \stackrel{\mbox{\scriptsize \normalfont i.i.d}}{\sim} \mbox{\normalfont N}_p( {\bf 0}, \bW), \\
\bDelta_{t_{s^*+1}}&=\bDelta_{t_{s^*}}+\bepsilon_{t_{s^*}}  \qquad \qquad \qquad \quad \ \  \bepsilon_{t_{s^*}}  \ \smash{\stackrel{\mbox{\scriptsize \normalfont i.i.d}}{\sim}} \ \mbox{\normalfont N}_p( {\bf 0}, \sigma_{\Delta}^2 {\bf I}_p).
\label{eq16}
\end{split}
\end{eqnarray}
In \eqref{eq16}, each $\bW_{[j,l]}$ is set equal to \smash{$\sigma_{\omega}^2\rho_{\beta[j,l]}$}, for  $j=1, \ldots, p$ and  $l=1, \ldots, p$, with $\rho_{\beta[j,l]}$ as in \eqref{eq8}, \smash{$\hat{\lambda}$} is the maximum marginal likelihood estimate of $\lambda$ discussed in Section~\ref{sec_3}, whereas the starting values \smash{$\hat{\bbeta}$} and  \smash{$\hat{\bDelta}$} for ${\bbeta}_{t_n}$ and  \smash{${\bDelta}_{t_n}=[{\Delta}_{1,t_n}, \ldots, {\Delta}_{p, t_n}]^{\intercal}$}, respectively, are defined in order to ensure flexible, yet shock--robust, point forecasts at each $t_{s^*}=t_{n+1}, \ldots, t_{n+n^*}$. More specifically,  \smash{$\hat{\bbeta}$} corresponds to the mean of the smoothing distribution for $\bbeta_{t_n}$, whereas  \smash{$\hat{\bDelta}$} is defined as the  median of the estimates of $\partial \beta_j(t)/\partial t$, over the last 25 years $t_n, \ldots, t_{n-24}$ computed under the smoothing distribution in \eqref{eq13}, for every $j=1, \ldots, p$. Since the smoothing distribution in \eqref{eq13} is a Gaussian, these estimates coincide with the elements having position $2+3(j-1)$ in $\bmu_{t_{s}|t_{1:n}}$, for each $j=1, \ldots, p$ and $t_s=t_n, \ldots, t_{n-24}$. Rather than projecting forward in time a single global dynamic component, as in  \citet{lee1992modeling}, strategy \eqref{eq15}--\eqref{eq16} gains accuracy by extrapolating  multiple time dynamics corresponding to the different age classes, while relying on a natural initialization for the drift terms which  leverages  the ability of the proposed model to explicitly learn dynamics also in the first order derivatives quantifying rates of change in mortality levels. To ensure robustness to shocks while adapting to the most recent globally--linear trend dynamics, the starting drift term $\hat{\bDelta}$ is set to the median, rather than the mean, of these estimates over the last 25 years, which was found to be a robust default time horizon in the application to multiple countries in Section~\ref{sec_4}. This is  in line with similar results in  \citet{lee1992modeling} on the time horizon to condition on.

Although the above strategy yields improved point forecasts, the derivation of effective predictive intervals requires an accurate estimate of  \smash{$\sigma^2_\omega$} and \smash{${\sigma}^2_{\Delta}$} in \eqref{eq16}, along with the variance parameter \smash{$\sigma^2_{\psi}$} in the observation equation which yields the forecasted \smash{$\hat{\bef}_{t_{s^*}}$} in \eqref{eq15}, namely
 \begin{eqnarray}
\log \bm_{t_{s^*}} =[\bg_1, \ldots, \bg_p] {\bbeta}_{t_{s^*}} + \bnu_{t_{s^*}}, \qquad \bnu_{t_{s^*}}\sim \text{N}_k({\bf 0},  \sigma^2_\psi {\bf I}_k).
\label{eq16a}
\end{eqnarray}
Consistent with the strategy adopted for deriving the point forecasts, these three variances are obtained via maximum marginal likelihood under model \eqref{eq16}--\eqref{eq16a} applied to data from $t_{n-24}$  until $t_{n}$.  To suitably connect model \eqref{eq9}--\eqref{eq10} with the simpler formulation in  \eqref{eq16}--\eqref{eq16a}, ${\bDelta}_{t_{n-24}}$ is initialized at \smash{$\mbox{N}_p(\hat{\bmu}_{\bDelta}, \mbox{diag}(\hat{\bsigma}^2_{\Delta}))$}, where $\hat{\bmu}_{\bDelta}$ is the  median of the estimated $\partial \beta_j(t)/\partial t$, for each $j=1, \ldots, p$, under the smoothing distribution in \eqref{eq13}, over the 25 years preceding  $t_{n-24}$, while the generic entry $\hat{\sigma}_{\Delta_j}^2$  in \smash{$\hat{\bsigma}^2_{\Delta}$} corresponds to the sample variance of the median estimate \smash{$\hat{\mu}_{\Delta_j}$}, for $j=1, \ldots, p$, computed via a set of simulations from the smoothing distribution. The initial vector  ${\bbeta}_{t_{n-24}}$ is instead assumed to follow the $p$--variate Gaussian distribution where the mean vector is obtained from the one--step--ahed projection under \eqref{eq16} of the smoothing estimate for ${\bbeta}_{t_{n-25}}$  provided by model \eqref{eq9}--\eqref{eq10}, whereas the covariance  matrix coincides with that of the predictive distribution at $t_{n-25}$ under model \eqref{eq9}--\eqref{eq10}.

From a practical perspective, the above strategies can be still implemented via standard \texttt{R} libraries for time series analysis, and, as previously discussed, are reminiscent of the two--step approach  by  \citet{lee1992modeling}, which relies on an in--sample estimate of the global time--specific effect and then fits, for future projections, a random walk plus drift model on the subset of these time effects corresponding to a suitably defined most recent window.

\section{Learning and Forecasting of Mortality Across Countries}\label{sec_4}
In order to quantify the improvements provided by the novel \textsc{bsp}  developed  in Sections~\ref{sec_2}--\ref{sec_3}, we consider extensive analyses and performance comparisons with a main focus on the gender--specific age--period log--mortality rates for the countries discussed in Section~\ref{sec_1.1},  across a wide time horizon that spans from $1933$ until either $2019$ or $2020$ depending on data availability in the \texttt{Human} \texttt{Mortality} \texttt{Database} at \url{https://www.mortality.org/}. 

In modeling the age--period log--mortality rates for the four countries that are object of our study we follow common practice in demographic studies \citep[e.g.,][]{haberman2011comparative,currie2016fitting,camarda2019smooth} and consider a separate analysis for each combination of gender--country, leading to a total of eight different implementations of the  \textsc{bsp} model in Sections~\ref{sec_2}--\ref{sec_3} and of selected state--of--the--art competitors. More specifically, for each combination of gender and country we study the age--period log--mortality rates via the  Gaussian state--space approximation in \eqref{eq9}--\eqref{eq10} of the  \textsc{bsp} model defined in Section~\ref{sec_21}, employing the $p=20$ \textsc{b}--spline bases $g_1(x), \ldots, g_{20}(x)$ illustrated in Figure~\ref{fig:spline}, and considering Mat\'ern covariance functions for $\mathcal{K}(\bar{x}_j,\bar{x}_l; \bgamma_\beta)$ and $\mathcal{K}(\bar{x}_j,\bar{x}_l; \bgamma_a)$ in \eqref{eq8} \citep[e.g.,][Ch.\ 4]{williams2006gaussian}. The specification of a total of $p=20$ bases over the observed age range $x_1=0, \ldots, x_{101}=100$ is motivated by a similar choice for the age dimension in the bivariate  \textsc{b}--splines construction of \citet{camarda2019smooth}. Consistent with the graphical evidence in Figure~\ref{fig:data}, these  \textsc{b}--splines are more dense at early and late ages to achieve increased flexibility in capturing local dynamic variations  for such classes. For the same reasons, the first  \textsc{b}--spline $g_1(x)$ is the only one that is active at $x_1=0$ since mortality at age $0$ is known to display peculiar patterns relative to those from age $1$ onward, thereby requiring increased flexibility relative to the other classes  \citep{camarda2019smooth}. The Mat\'ern covariance functions parameters $\bgamma_\beta$ are instead set at $(0.5,1)$ to induce local borrowing of information only across close ages, whereas no correlation is enforced on the instantaneous means to increase the flexibility in modeling  shocks affecting only specific age classes. Although these covariance parameters could be estimated, together with $\sigma_{\overline{m}}^2$, $\sigma_{\beta}^2$, \smash{$\sigma_{a}^2$} and $\lambda$, via maximum marginal likelihood under model \eqref{eq9}--\eqref{eq10}, as clarified in Table~\ref{tab} and in Figures~\ref{fig:smoothing}--\ref{fig:covid}, the suggested settings provide robust default choices to accurately learn and forecast several mortality patterns across different countries. In fact, moderate changes in the Mat\'ern covariance parameters as well as in the number and location of the \textsc{b}--spline bases  did not change the final conclusions. Notice also that, although the squared exponential covariance function provides another routinely--implemented alternative, such a function can be recovered as a special case of the Mat\'ern one which, therefore, yields a more general class with a higher degree of flexibility. 

\begin{figure}[t]
	\centering
		\includegraphics[width=1\textwidth,height=0.5\textwidth]{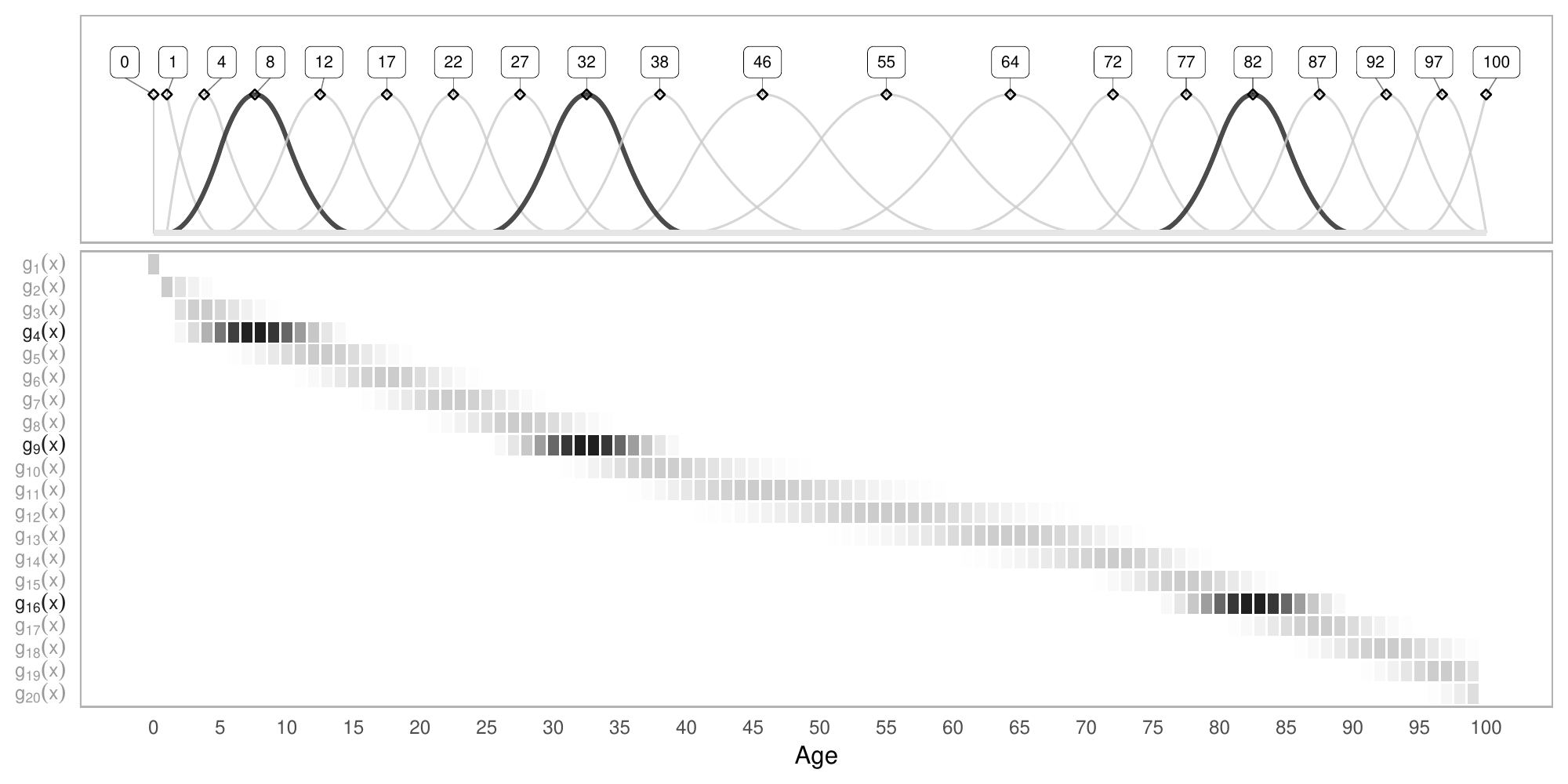}
		\caption{ \footnotesize Graphical representation of the $p=20$ selected \textsc{b}--spline bases $g_1(x), \ldots, g_{20}(x)$ along with a heatmap  clarifying the intensity of each spline in the corresponding age range. The number $\bar{x}_j$ associated to each spline $g_j(x)$ in the top panel denotes the age at which such a spline takes its maximum value, for each $j=1, \ldots, 20$. For illustrative purposes, three  bases and the corresponding active range of ages are highlighted in black. }
	\label{fig:spline}
\end{figure}

As a first assessment, we evaluate in Table~\ref{tab} the performance in point forecasting of the \textsc{bsp} formulation proposed in Sections~\ref{sec_2}--\ref{sec_3}, and quantify the improvements relative to the state--of--the--art competitors discussed in Section~\ref{sec_1}. These include the classical \citet{lee1992modeling}  (\textsc{lc}) and age--period--cohort (\textsc{apc}) \citep{holford1983estimation,osmond1985using} models,  along with the subsequent developments and extensions in  \citet{renshaw2006cohort} (\textsc{rh}),  \citet{cairns2006two} (\textsc{cbd}), \citet{hyndman2007robust} (\textsc{hu}), \citet{plat2009stochastic} (\textsc{plat}) and  \citet{camarda2019smooth} (\textsc{cp}), which currently represent the leading strategies within mortality forecasting. As illustrated in the code available at \url{https://github.com/fpavone/BSP-mortality}, these models can be readily implemented via standard \texttt{R} functions in the packages \texttt{StMoMo} \citep{millossovich2018stmomo},  \texttt{demography} \citep{hyndman2006demography} and in the \texttt{R} code within  the supplementary materials of  \citet{camarda2019smooth}. Recalling Section~\ref{sec_3}, parameter estimation, inference and forecasting under the Gaussian state--space formulation of the proposed \textsc{bsp} approach can be effectively implemented via the \texttt{R} package   \texttt{KFAS} \citep{helske2017kfas}. 

\begin{table}[t]
\renewcommand{\arraystretch}{1}
\centering
\caption{\footnotesize For the eight methods under analysis and ten predictive horizons, overall median of the absolute difference between the forecasted and observed log--mortality rates computed from all the country--gender--age--year combinations. Bold values denote the best performance for each  predictive horizon, whereas the gray column corresponds to the proposed \textsc{b}--spline process with locally--adaptive dynamic coefficients.  For the three top performing methods, the first and third quartiles of the absolute differences are also reported within brackets.  } 
\small
\scalebox{0.92}{
\begin{tabular}{lrrrrrrrr}
  \hline
{\footnotesize Steps  ahead} \quad & \cellcolor{gray!20}  \textsc{bsp} & \textsc{cp}  &  \textsc{hu}  &   \textsc{plat}  &  \textsc{rh}  &  \textsc{apc}  &  \textsc{lc} &  \textsc{cbd}  \\ 
  \hline
   $1$ \quad & \cellcolor{gray!20} ${\bf 0.032}$ {\footnotesize \bf $[0.01, 0.07]$}  & $0.033$  {\footnotesize $[0.01, 0.07]$}  & $0.038$ {\footnotesize $[0.02, 0.09]$}    & $0.097$ & $0.068$ & $0.126$ & $0.107$ & $0.142$ \\ 
     $2$ \quad  &\cellcolor{gray!20}   ${\bf 0.037}$ {\footnotesize \bf $[0.02, 0.08]$}  & $0.040$ {\footnotesize $[0.02, 0.08]$} & $0.047$ {\footnotesize $[0.02, 0.09]$}& $0.104$ & $0.077$ & $0.138$ & $0.114$ & $0.152$ \\ 
     $3$ \quad & \cellcolor{gray!20}  ${\bf  0.044}$ {\footnotesize \bf $[0.02, 0.09]$}  & $0.048$ {\footnotesize $[0.02, 0.10]$}  & $0.056$ {\footnotesize $[0.03, 0.11]$}& $0.113$ & $0.087$ & $0.150$ & $0.122$ & $0.163$ \\ 
    $4$ \quad & \cellcolor{gray!20}  ${\bf 0.050}$ {\footnotesize \bf $[0.02, 0.10]$}  & $0.057$ {\footnotesize $[0.03, 0.11]$}& $0.064$ {\footnotesize $[0.03, 0.12]$} & $0.122$ & $0.095$ & $0.162$ & $0.129$ & $0.174$ \\ 
     $5$ \quad & \cellcolor{gray!20}  ${\bf  0.056}$ {\footnotesize \bf $[0.03, 0.11]$}   & $0.066$ {\footnotesize $[0.03, 0.12]$}& $0.072$ {\footnotesize $[0.03, 0.14]$}  & $0.131$ & $0.105$ & $0.176$ & $0.137$ & $0.187$ \\ 
     $6$ \quad& \cellcolor{gray!20}  ${\bf 0.063}$ {\footnotesize \bf $[0.03, 0.12]$}   & $0.073$ {\footnotesize  $[0.03, 0.13]$}& $0.080$ {\footnotesize $[0.04, 0.15]$} & $0.140$ & $0.117$ & $0.190$ & $0.144$ & $0.200$ \\ 
     $7$ \quad  & \cellcolor{gray!20}  ${\bf 0.070}$ {\footnotesize \bf $[0.03, 0.13]$}   & $0.081$ {\footnotesize $[0.04, 0.15]$}& $0.088$ {\footnotesize $[0.04, 0.16]$}& $0.149$ & $0.129$ & $0.205$ & $0.154$ & $0.214$ \\ 
     $8$ \quad & \cellcolor{gray!20}  ${\bf 0.076}$ {\footnotesize \bf $[0.03, 0.14]$}   & $0.088$ {\footnotesize $[0.04, 0.16]$}& $0.094$ {\footnotesize $[0.04, 0.18]$}& $0.159$ & $0.143$ & $0.222$ & $0.161$ & $0.229$ \\ 
     $9$ \quad & \cellcolor{gray!20}  ${\bf 0.083}$ {\footnotesize \bf $[0.04, 0.16]$}    & $0.095$ {\footnotesize $[0.04, 0.17]$} & $0.102$ {\footnotesize $[0.05, 0.19]$} & $0.169$ & $0.156$ & $0.233$ & $0.171$ & $0.246$ \\ 
    $10$ \quad & \cellcolor{gray!20}  ${\bf 0.093}$ {\footnotesize \bf $[0.04, 0.17]$}   & $0.105$ {\footnotesize $[0.05, 0.19]$} & $0.110$ {\footnotesize $[0.05, 0.21]$}& $0.179$ & $0.174$ & $0.251$ & $0.178$ & $0.263$ \\ 
   \hline
\end{tabular}}
\label{tab}
\end{table}

Table~\ref{tab} summarizes the performance in point forecasting  of the above strategies over different time horizons, ranging from 1--step--ahead to 10--step--ahead. For the eight gender--country combinations, these forecasts are obtained by sequentially fitting each model on the observed age--period log--mortality rates from $1933$ up until a last year ranging from $1990$ to $2010$, and then predicting, for each of these final years from $1990$ to $2010$, the age--period log--mortality rates in the subsequent ten years. This produces, under each model and step--ahead predictive horizon, a total $ 2 \times 101 \times 21 $ forecasts per country --- except for Italy whose data are available only until 2019 --- which correspond to the different combinations of gender, ages and last observation time, thereby providing  a large sample of predictions to accurately compare the different methods. Table~\ref{tab} displays the overall median of the absolute differences between these forecasts and the actual observed log–mortality rates, across countries, gender, ages and the last observation times. Results provide empirical evidence for the improved forecasting accuracy of the proposed \textsc{bsp}, which outperforms all the state--of--the--art alternatives for every predictive horizon. As expected, \textsc{cp} \citep{camarda2019smooth}  and \textsc{hu}  \citep{hyndman2007robust} are the most competitive alternatives. Recalling Section~\ref{sec_1}, also these strategies rely on a flexible basis expansion, but are not as effective as the proposed \textsc{bsp} in incorporating all the core structures of age--period mortality surfaces, thereby allowing our procedure to further improve forecasting accuracy. Notice that, although the magnitude of these improvements is not always remarkable, the  \textsc{bsp} approach remains systematically more accurate than all the methods considered, both in terms of medians and the two quartiles. Such a finding was further confirmed in additional studies for other countries in the \texttt{Human} \texttt{Mortality} \texttt{Database} (i.e., Czech Republic, Denmark and France) and when comparing the forecasting performance of the different methods via the mean squared error, rather than the median of the absolute error. The latter measure is preferred in Table~\ref{tab} since it provides a more robust and direct measure of the actual distance between the forecasted and observed mortality rates. We shall also emphasize that, in these contexts, even a small reduction in the predictive errors for the log--mortality rates can have a major impact in population forecasts since, as is clear from equation \eqref{eq1}, such rates are multiplied by the central exposed to risk $\textsc{e}_{xt}$ when modeling the total death counts $d_{xt}$, with $\textsc{e}_{xt}$ in the order of tens--to--hundreds of thousands in common population analyses at the country level. This reasoning applies also to other demographic measures derived as a function of the  mortality rates, such as, for example, the life expectancy at birth, whose forecasts can be directly obtained via the \texttt{R} package \texttt{demography} from those produced for $m_{xt}$. Also in this case,   \textsc{bsp} was still found to outperform  all the state--of--the--art competitors and almost halved the 10--step--ahead predictive errors of both \textsc{cp} and \textsc{hu}.

From a computational perspective, all the methods analyzed in Table~\ref{tab}, including the proposed \textsc{bsp} strategy, facilitate tractable and scalable implementations which yield runtimes for estimation, inference and forecasting always below one minute. This is several orders of magnitude lower than the yearly time scale at which mortality data are typically analyzed, thereby providing effective solutions for rapid updating of inferences and forecasts.

To further clarify the major advantages of the proposed  \textsc{bsp} construction, we also considered predictive comparisons against direct implementations of the simpler building--blocks underlying the proposed formulation and forecasting approach. More specifically,  instead of relying on the strategy outlined in Section~\ref{sec_32}, we considered forecasts obtained either under the predictive distribution in \eqref{eq12} of the original \textsc{bsp} formulation in \eqref{eq9}--\eqref{eq10}, or from the direct implementation of separate nested Gaussian processes for the trajectories $f_t(x)$ of every age $x \in \mathcal{X}$ rather than employing the more structured formulation proposed within equations \eqref{eq2}--\eqref{eq4}. Focusing again on the time horizon ranging from 1--step--ahead to 10--step--ahead forecasts,  the overall medians of the absolute differences between the forecasted and observed log--mortality rates were $[0.035, 0.045, 0.057, 0.067,  0.079, 0.091, 0.104, 0.116, 0.129, 0.145]$ for the first alternative strategy and $[0.058, 0.109, 0.179, 0.264, 0.370, 0.490, 0.630, 0.785, 0.959, 1.149]$ for the second. Comparing these results with those in the first column of Table~\ref{tab} clarifies the key advantages of the proposed \textsc{bsp} construction that achieves an improved predictive accuracy by carefully borrowing information across ages via a structured \textsc{b}--spline representation with dependence across the dynamic coefficients, which is subsequently leveraged to develop the parsimonious, yet effective, forecasting strategy outlined in Section~\ref{sec_32}.

\begin{table}[t]
\renewcommand{\arraystretch}{1}
\centering
\caption{\footnotesize For the eight methods under analysis and ten predictive horizons, relative proportion, computed  from all the country--gender--age--year combinations, of the $95\%$ predictive intervals containing the observed log--mortality rates. Bold values denote the best performance for each  predictive horizon, whereas the gray column corresponds to the proposed \textsc{b}--spline process with locally--adaptive dynamic coefficients.  } 
\small
\scalebox{0.92}{
\begin{tabular}{l@{\hspace{5em}}r@{\hspace{2em}}r@{\hspace{2em}}r@{\hspace{2em}}r@{\hspace{2em}}r@{\hspace{2em}}r@{\hspace{2em}}r@{\hspace{2em}}r}
  \hline
{\footnotesize Steps  ahead} \quad & \cellcolor{gray!20}  \textsc{bsp} & \textsc{cp}  &  \textsc{hu}  &   \textsc{plat}  &  \textsc{rh}  &  \textsc{apc}  &  \textsc{lc} &  \textsc{cbd}  \\ 
  \hline
 1 & \cellcolor{gray!20} ${\bf 0.955}$ & $0.514$ & $0.980$ & $0.404$ & $0.394$ & $0.299$ & $0.224$ & $0.325$ \\ 
     2 & \cellcolor{gray!20} ${\bf 0.954}$ & $0.481$ & $0.982$ &  $0.519$ & $0.450$ & $0.378$ & $0.290$ & $0.388$ \\ 
     3 &\cellcolor{gray!20}  ${\bf 0.953}$ & $0.454$ & $0.982$ &  $0.571$ & $0.462$ & $0.407$ & $0.336$ & $0.429$ \\ 
     4 & \cellcolor{gray!20} ${\bf 0.952}$ & $0.428$ & $0.984$ &  $0.603$ & $0.480$ & $0.430$ & $0.367$ & $0.459$ \\ 
     5 & \cellcolor{gray!20} ${\bf 0.950}$ & $0.403$ & $0.984$ &  $0.625$ & $0.482$ & $0.446$ & $0.385$ & $0.479$ \\ 
     6 & \cellcolor{gray!20} ${\bf 0.948}$ & $0.370$ & $0.982$ &  $0.641$ & $0.478$ & $0.456$ & $0.397$ & $0.494$ \\ 
     7 &\cellcolor{gray!20}  ${\bf 0.948}$ & $0.343$ & $0.979$ &  $0.642$ & $0.476$ & $0.455$ & $0.414$ & $0.504$ \\ 
     8 & \cellcolor{gray!20} ${\bf 0.946}$ & $0.324$ & $0.977$ &  $0.645$ & $0.464$ & $0.456$ & $0.422$ & $0.508$ \\ 
     9 & \cellcolor{gray!20} ${\bf 0.945}$ & $0.311$ & $0.976$ &  $0.646$ & $0.449$ & $0.460$ & $0.426$ & $0.514$ \\ 
    10 & \cellcolor{gray!20} ${\bf 0.937}$ & $0.294$ & $0.972$ & $0.646$ & $0.439$ & $0.451$ & $0.429$ & $0.512$ \\ 
   \hline
\end{tabular}}
\label{tab2}
\end{table}

Let us conclude the analysis of  forecasting performance by assessing the calibration of  predictive intervals under the different methods considered in Table~\ref{tab}. To this end, Table~\ref{tab2} displays the relative proportion, computed from all the different combinations of country--gender--age--year, of the $95\%$ predictive intervals which contain the observed log--mortality rates. Also in this setting, the \textsc{bsp} intervals computed under the methods illustrated in Section~\ref{sec_32} achieve improved overall calibration relative to those obtained under the competing strategies. Comparing the results in Table~\ref{tab2} with those  in Table~\ref{tab}, the poor performance of \textsc{plat}, \textsc{rh},  \textsc{apc},  \textsc{lc} and  \textsc{cbd} is  mainly attributable to the bias in the point forecasts at which such intervals are centered, whereas \textsc{cp} suffers from an underestimation of the predictive variance, possibly due to challenges in the implementation of the employed bootstrap strategy within the time--series context. As illustrated in Table~\ref{tab2}, \textsc{hu} is the only competitive strategy, although it exhibits an over--coverage tendency with intervals having a similar length to those obtained under the proposed  \textsc{bsp} construction. We shall  emphasize that, when stratifying by age,  calibration of  \textsc{bsp} intervals is generally less accurate at younger ages than older ones, thus motivating additional future refinements.

\begin{figure}
	\centering
		\includegraphics[width=0.99\textwidth]{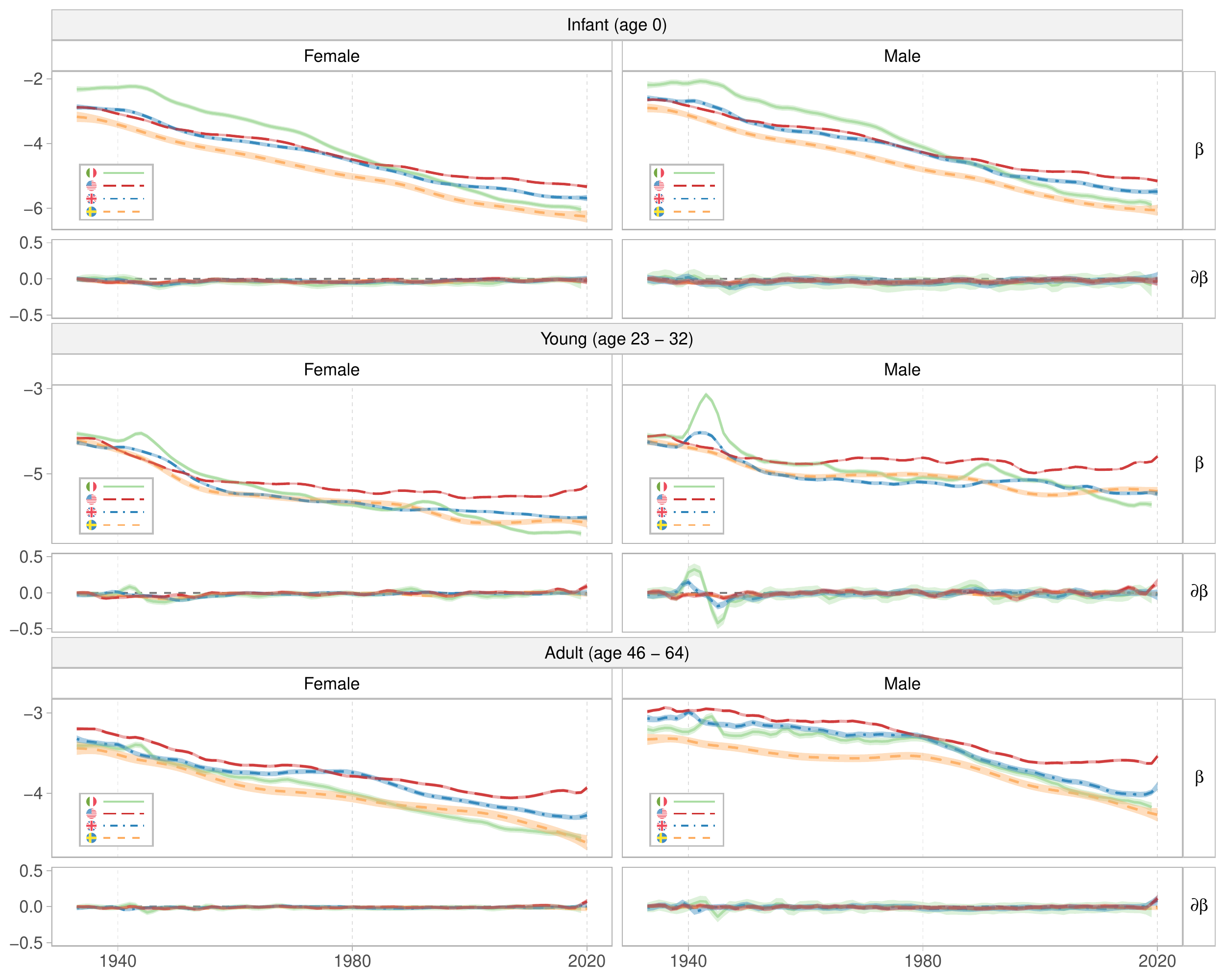}
		\caption{\footnotesize For females (left) and males (right), and the four countries under analysis, time trajectories of three representative \textsc{b}--splines coefficients along with the corresponding first derivatives, as obtained from the smoothing distribution under the proposed \textsc{bsp} model. The lines correspond to the trajectories of the means, whereas the shaded areas denote the pointwise $95\%$ credible intervals. See the online article for a color version of this figure.}	
	\label{fig:smoothing}
\end{figure}

The improvements in forecasting performance of \textsc{bsp} motivate additional analyses and country comparisons of the age--period mortality surfaces, which are further facilitated by the interpretable construction of the proposed model  in Sections~\ref{sec_2}--\ref{sec_3}. In fact, as illustrated in Figures~\ref{fig:smoothing}--\ref{fig:covid}, \textsc{bsp} allows to formally  study and compare changes in the mortality patterns across years and specific ages via inference on the location and variability of the smoothing distribution for the coefficients of the splines active in those age classes. These selected trajectories are displayed in Figure~\ref{fig:smoothing}, along with the corresponding first derivatives, and highlight interesting differences across countries in the dynamic evolution of age--specific mortality rates. For instance, in the first row of Figure~\ref{fig:smoothing},  \textsc{bsp} learns a peculiar trajectory for infant mortality in Italy, characterized by a structural break soon after the World War {\rm II} which leads to  a faster decay in infant mortality with respect to other countries. This remarkable change is aligned with the so--called {\em Italian miracle},  a phase of  rapid economic growth and improved life conditions after the World War II \citep[e.g.,][]{ginsborg1990history}, progressively bringing infant mortality in Italy to even lower levels than those registered in countries such as the United Kingdom and the United States. The local adaptivity of \textsc{bsp} can be instead appreciated in the second row of  Figure~\ref{fig:smoothing}, where the proposed model learns the rapid mortality increment corresponding to the World War II, which, as expected, is mainly evident for Italian males, albeit visibile also for males in the United Kingdom. Notice that, since most of the United States military deaths happened abroad, these counts do not contribute to \textsc{us} mortality as recorded in the \texttt{Human} \texttt{Mortality} \texttt{Database}. In the second row of  Figure~\ref{fig:smoothing}, \textsc{bsp} also learns an excess--mortality peak in Italy for both young males and females in the late  '80s and early '90s. This  provides quantitative evidence for the rapid and severe combined effect of \textsc{aids}, car accidents and overdoses in Italy during those years for young age classes \citep[see e.g.,][]{conti1994aids,conti1997premature}. Despite this shock, Italy displays a generally decreasing trend in the spline coefficients associated to young age classes, that interestingly departs from the general stagnation, or even the increasing trend, which \textsc{bsp} learns for the other three countries, especially in the last two decades. This is particularly remarkable in the United States, which display peculiar mortality patterns characterized by slower mortality decrements or even increments, mainly evident since the '80s, for all the three age classes analyzed in Figure~\ref{fig:smoothing}. These quantitive findings  further support a number of studies on the recent  \textsc{us} mortality crisis as a consequence of specific disparities and vulnerabilities associated with young and adult age classes \citep[][]{ho2010us,woolf2019life,glei2021us,preston2021excess,case2021deaths}. As is clear from Figure~\ref{fig:smoothing}, all the smoothing distributions analyzed are characterized by limited uncertainty, thus supporting the reliability of these findings.

\begin{figure}[t]
	\centering
		\includegraphics[width=0.98\textwidth]{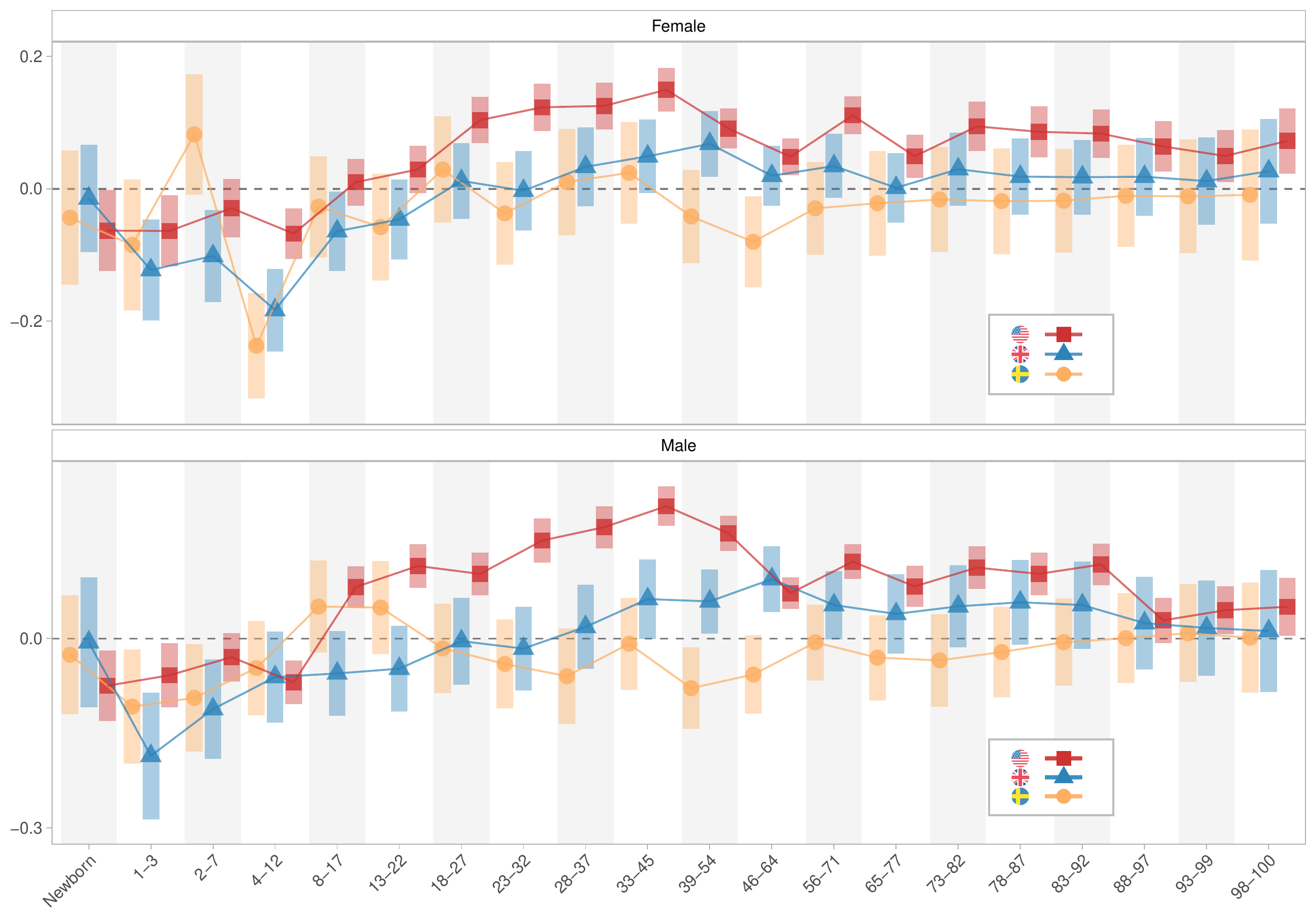}
		\caption{\footnotesize For both females and males, means (colored symbols) and $95\%$ credible intervals (colored boxes) of the smoothing distribution for the difference between the spline coefficients in year 2020 and the corresponding average over 2014--2019, for  the United States, United Kingdom and Sweden. Data for Italy in 2020 are not yet available in the \texttt{Human} \texttt{Mortality} \texttt{Database}. This representation provides a summary of excess mortality in 2020. See the online article for a color version of this figure.}	
	\label{fig:covid}
\end{figure}

Notably, the above patterns are also associated with differences in the rates of change of mortality  levels during \textsc{covid}--19, as inferred from the analysis of the first derivatives of  the three splines coefficients in Figure~\ref{fig:smoothing} for year 2020. The ability of  \textsc{bsp} to explicitly model and quantify uncertainty also in these rates of change in mortality trends crucially allows to learn a mortality shock during \textsc{covid}--19 in young age classes only for the United States and not for the other countries under analysis; see the panels $\partial \beta$ in the second row of Figure~\ref{fig:smoothing}. These findings are further expanded in Figure~\ref{fig:covid} where the focus is on the smoothing distribution of the differences between each spline coefficient in year 2020 and its average in the previous five years, for Sweden, United Kingdom and United States; data for Italy in year 2020 are not yet available in the \texttt{Human} \texttt{Mortality} \texttt{Database} at \url{https://www.mortality.org/}. Consistent with the discussion of Figure~\ref{fig:smoothing},  \textsc{bsp} infers a noticeably--high excess mortality in the United States, for both females and males, which is surprisingly visibile from very young age classes onward, and whose magnitude is much higher than in the United Kingdom and Sweden. This fundamental finding further corroborates recent studies on the association between \textsc{covid}--19 effects and the peculiar pre--existing \textsc{us} disparities and vulnerabilities, especially in relation to risk factors \citep[e.g.,][]{wiemers2020disparities}. Despite the less stringent  policies adopted in Sweden, the \textsc{covid}--19 mortality shock for such a country is less evident than the one registered in the United Kingdom and United States. It is important to emphasize that also Sweden experienced an excess mortality during the first and second wave of the \textsc{covid}--19 pandemic \citep[e.g.,][]{juul2022mortality}. However, when aggregating all--causes mortality at a yearly scale, such increments become less visibile and systematic, pointing toward a possible mortality displacement effect \citep{juul2022mortality}, also known as {\em harvesting}; i.e., a phase of excess deaths followed by a mortality deficit that has a balancing effect when aggregating at a larger time scale.

\section{Conclusion and Future Research Directions}\label{sec_5}
This article proposes a novel  \textsc{b}--spline process with locally--adaptive dynamic coefficients for accurate learning and forecasting of mortality patterns across ages and periods. Such a process decomposes the age--period mortality surface as a flexible, yet interpretable, function of age, and crucially treats the dynamics of this function across periods via a suitable stochastic process of time that explicitly incorporates the core structures of mortality evolution through a set of stochastic differential equations. This allows to (i) incorporate and learn differences in the time patterns of mortality across age classes, while borrowing information between close ages, (ii) explicitly infer and project not only age--specific mortality trends, but also the corresponding rates of change, (iii) characterize dynamics that fluctuate among periods of rapid and slow variation, (iv) devise simple and accurate forecasting strategies for log--mortality rates which are both flexible and shock--robust, (v) develop computationally--efficient methods for filtering, smoothing and prediction of mortality patterns via closed--form Kalman filter recursions.

To the best of our knowledge, none of the solutions currently available in the literature accounts for all the aforementioned properties within a single formulation. In fact, as illustrated in the application in Section~\ref{sec_4}, the proposed model generally improves forecasting performance and crucially expands the set of findings which can be obtained from the analysis of age--period mortality data. This perspective can open new avenues to formally compare differences in mortality patterns across ages, countries and years, while quantifying  possible heterogeneities in the rate of change of mortality and in the impact of shocks, such as, for example, the recent \textsc{covid}--19 pandemic,  for which we infer notable differences across countries.

Besides providing an important contribution to the literature on mortality modeling, the proposed formulation also motivates several future advancements.  A relevant direction is to extend the \textsc{b}--spline process within Section~\ref{sec_2} for joint modeling of multiple populations, possibly from high, middle and low income countries. Although the \texttt{Human} \texttt{Mortality} \texttt{Database} has data only for the first group, such an extension can be accomplished  within our formulation by considering a mixture of \textsc{b}--spline processes that would further allow to cluster countries characterized by similar age--period mortality patterns. This facilitates borrowing of information for countries with low population size or studies at a local level, and incorporates improved coherence in mortality forecasts, an important aspect in recent multi--population studies \citep[e.g.,][]{li2005coherent,wen2021fitting,wang2022multi}. Alternatively, it would be of interest to specify country--specific   \textsc{b}--spline processes with locally--adaptive dynamic coefficients and then induce dependence among such processes via a suitable graphical model \citep{lauritzen1996graphical}, thus allowing inference on conditional independence structures in age--period mortality dynamics among different countries, while borrowing information to improve inference and forecasting. This can be particularly useful also in the joint modeling of male and female mortality.

The above directions are also of interest when the focus is on joint modeling of mortality patterns for different causes--of--death, rather than countries \citep[e.g.,][]{kjaergaard2019forecasting}. 
 
 
 \begin{acks}[Acknowledgments]
 We are grateful to Stefano Mazzuco and Emanuele Aliverti for the~insightful discussion on the data analyzed.  This research was supported by the \textsc{miur}--\textsc{prin} 2017 project \textsc{Select} (20177BRJXS). In addition, Daniele Durante acknowledges support by the   \textsc{mur}--\textsc{prin} 2022 project \textsc{Caronte} (2022KBTEBN, funded by the European Union -- Next Generation EU), during the final revision of the present article.
 \end{acks}

 \appendix

\section{Proofs of Propositions}\label{app2}

\begin{proof}[Proof of Proposition~\ref{p1}]
The proof of Proposition~\ref{p1} adapts similar derivations considered by \citet{liang2014modeling} in the context of binomial logistic--normal distributions. In particular, under model \eqref{eq1} we have that  \smash{$(d_{xt} \mid \overline{m}_{xt}) \stackrel{\mbox{\scriptsize ind}}{\sim} \mbox{Poisson}(\textsc{e}_{xt}\overline{m}_{xt})$}. Hence, $\E(m_{xt} \mid \overline{m}_{xt})=\E(d_{xt}/\textsc{e}_{xt} \mid \overline{m}_{xt})= \overline{m}_{xt}$ and $\mbox{var}(m_{xt} \mid \overline{m}_{xt})=\mbox{var}(d_{xt}/\textsc{e}_{xt} \mid \overline{m}_{xt})= \overline{m}_{xt}/\textsc{e}_{xt}$. Therefore,  for fixed $\overline{m}_{xt}$, it follows that $\mbox{var}(m_{xt} \mid \overline{m}_{xt}) \to 0$ as $\textsc{e}_{xt} \to \infty$, which implies mean square convergence of $m_{xt}$ to $\overline{m}_{xt}$ as $\textsc{e}_{xt} \to \infty$. As a consequence, we have also convergence in probability, which further guarantees  convergence in distribution, i.e., $\lim_{\textsc{e}_{xt}  \to \infty}\mbox{pr}(d_{xt}/\textsc{e}_{xt} \leq u \mid \overline{m}_{xt})=\mathbbm{1}(\overline{m}_{xt} \leq u)$. Leveraging this result and applying the dominated convergence theorem, it follows that 
\begin{eqnarray*}
\begin{split}
\lim_{\textsc{e}_{xt}  \to \infty}\mbox{pr}(d_{xt}/\textsc{e}_{xt}\leq u)&=\lim_{\textsc{e}_{xt}  \to \infty}\int_{0}^{\infty}\mbox{pr}(d_{xt}/\textsc{e}_{xt}\leq u \mid \overline{m}_{xt}) p(\overline{m}_{xt}) \partial \overline{m}_{xt}\\
&=\int_{0}^{\infty}\lim_{\textsc{e}_{xt}  \to \infty}\mbox{pr}(d_{xt}/\textsc{e}_{xt} \leq u \mid \overline{m}_{xt}) p(\overline{m}_{xt}) \partial \overline{m}_{xt}\\
&=\int_{0}^{\infty}\mathbbm{1}(\overline{m}_{xt} \leq u) p(\overline{m}_{xt}) \partial \overline{m}_{xt}=\mbox{pr}(\overline{m}_{xt} \leq u).
\end{split}
\end{eqnarray*}
Hence, $d_{xt}/\textsc{e}_{xt}$ converges in distribution to the log--normal assumed for $\overline{m}_{xt}$ in equation \eqref{eq1} and, as a direct consequence of the continuous mapping theorem, it follows that $\log m_{xt}=\log (d_{xt}/\textsc{e}_{xt}) \to \mbox{\normalfont N}(f_{t}(x), \sigma_{{\overline{m}}}^2)$ in distribution, as $\textsc{e}_{xt} \to~\infty$, for any $x \in \mathcal{X}$ and $t\in \mathcal{T}$.
\end{proof}

\begin{proof}[Proof of Proposition~\ref{p2}]
The proof of Proposition~\ref{p2} follows directly by applying the derivations in Appendix A.6 of \citet{zhu2013locally} to the stochastic differential equations defined in~\eqref{eq3}--\eqref{eq4}.
\end{proof}



\begin{thebibliography}{63}


\bibitem[\protect\citeauthoryear{Alexopoulos, Dellaportas and
  Forster}{2019}]{alexopoulos2019bayesian}
\begin{barticle}[author]
\bauthor{\bsnm{Alexopoulos},~\bfnm{Angelos}\binits{A.}},
  \bauthor{\bsnm{Dellaportas},~\bfnm{Petros}\binits{P.}} \AND
  \bauthor{\bsnm{Forster},~\bfnm{Jonathan~J}\binits{J.~J.}}
(\byear{2019}).
\btitle{Bayesian forecasting of mortality rates by using latent {Gaussian}
  models}.
\bjournal{Journal of the Royal Statistical Society: Series A (Statistics in
  Society)}
\bvolume{182}
\bpages{689--711}.
\end{barticle}
\endbibitem

\bibitem[\protect\citeauthoryear{Aliverti, Mazzuco and
  Scarpa}{2022}]{aliverti2021dynamic}
\begin{barticle}[author]
\bauthor{\bsnm{Aliverti},~\bfnm{Emanuele}\binits{E.}},
  \bauthor{\bsnm{Mazzuco},~\bfnm{Stefano}\binits{S.}} \AND
  \bauthor{\bsnm{Scarpa},~\bfnm{Bruno}\binits{B.}}
(\byear{2022}).
\btitle{Dynamic modeling of mortality via mixtures of skewed distribution
  functions}.
\bjournal{Journal of the Royal Statistical Society: Series A (Statistics in
  Society)}
\bvolume{185}
\bpages{1030-1048}.
\end{barticle}
\endbibitem

\bibitem[\protect\citeauthoryear{Booth and Tickle}{2008}]{booth2008mortality}
\begin{barticle}[author]
\bauthor{\bsnm{Booth},~\bfnm{Heather}\binits{H.}} \AND
  \bauthor{\bsnm{Tickle},~\bfnm{Leonie}\binits{L.}}
(\byear{2008}).
\btitle{Mortality modelling and forecasting: A review of methods}.
\bjournal{Annals of Actuarial Science}
\bvolume{3}
\bpages{3--43}.
\end{barticle}
\endbibitem

\bibitem[\protect\citeauthoryear{Brouhns, Denuit and
  Vermunt}{2002}]{brouhns2002poisson}
\begin{barticle}[author]
\bauthor{\bsnm{Brouhns},~\bfnm{Natacha}\binits{N.}},
  \bauthor{\bsnm{Denuit},~\bfnm{Michel}\binits{M.}} \AND
  \bauthor{\bsnm{Vermunt},~\bfnm{Jeroen~K}\binits{J.~K.}}
(\byear{2002}).
\btitle{A Poisson log-bilinear regression approach to the construction of
  projected lifetables}.
\bjournal{Insurance: Mathematics and Economics}
\bvolume{31}
\bpages{373--393}.
\end{barticle}
\endbibitem

\bibitem[\protect\citeauthoryear{Cairns, Blake and Dowd}{2006}]{cairns2006two}
\begin{barticle}[author]
\bauthor{\bsnm{Cairns},~\bfnm{Andrew~JG}\binits{A.~J.}},
  \bauthor{\bsnm{Blake},~\bfnm{David}\binits{D.}} \AND
  \bauthor{\bsnm{Dowd},~\bfnm{Kevin}\binits{K.}}
(\byear{2006}).
\btitle{A two-factor model for stochastic mortality with parameter uncertainty:
  theory and calibration}.
\bjournal{Journal of Risk and Insurance}
\bvolume{73}
\bpages{687--718}.
\end{barticle}
\endbibitem

\bibitem[\protect\citeauthoryear{Cairns et~al.}{2009}]{cairns2009quantitative}
\begin{barticle}[author]
\bauthor{\bsnm{Cairns},~\bfnm{Andrew~JG}\binits{A.~J.}},
  \bauthor{\bsnm{Blake},~\bfnm{David}\binits{D.}},
  \bauthor{\bsnm{Dowd},~\bfnm{Kevin}\binits{K.}},
  \bauthor{\bsnm{Coughlan},~\bfnm{Guy~D}\binits{G.~D.}},
  \bauthor{\bsnm{Epstein},~\bfnm{David}\binits{D.}},
  \bauthor{\bsnm{Ong},~\bfnm{Alen}\binits{A.}} \AND
  \bauthor{\bsnm{Balevich},~\bfnm{Igor}\binits{I.}}
(\byear{2009}).
\btitle{A quantitative comparison of stochastic mortality models using data
  from England and Wales and the United States}.
\bjournal{North American Actuarial Journal}
\bvolume{13}
\bpages{1--35}.
\end{barticle}
\endbibitem

\bibitem[\protect\citeauthoryear{Camarda}{2019}]{camarda2019smooth}
\begin{barticle}[author]
\bauthor{\bsnm{Camarda},~\bfnm{Carlo~G}\binits{C.~G.}}
(\byear{2019}).
\btitle{Smooth constrained mortality forecasting}.
\bjournal{Demographic Research}
\bvolume{41}
\bpages{1091--1130}.
\end{barticle}
\endbibitem

\bibitem[\protect\citeauthoryear{Case and Deaton}{2021}]{case2021deaths}
\begin{bbook}[author]
\bauthor{\bsnm{Case},~\bfnm{Anne}\binits{A.}} \AND
  \bauthor{\bsnm{Deaton},~\bfnm{Angus}\binits{A.}}
(\byear{2021}).
\btitle{Deaths of Despair and the Future of Capitalism}.
\bpublisher{Princeton University Press}.
\end{bbook}
\endbibitem

\bibitem[\protect\citeauthoryear{Chopin and
  Papaspiliopoulos}{2020}]{chopin2020introduction}
\begin{bbook}[author]
\bauthor{\bsnm{Chopin},~\bfnm{Nicolas}\binits{N.}} \AND
  \bauthor{\bsnm{Papaspiliopoulos},~\bfnm{Omiros}\binits{O.}}
(\byear{2020}).
\btitle{An Introduction to Sequential Monte Carlo}.
\bpublisher{Springer}.
\end{bbook}
\endbibitem

\bibitem[\protect\citeauthoryear{Conti, Farchi and Prati}{1994}]{conti1994aids}
\begin{barticle}[author]
\bauthor{\bsnm{Conti},~\bfnm{Susanna}\binits{S.}},
  \bauthor{\bsnm{Farchi},~\bfnm{Gino}\binits{G.}} \AND
  \bauthor{\bsnm{Prati},~\bfnm{Sabrina}\binits{S.}}
(\byear{1994}).
\btitle{AIDS as a leading cause of death among young adults in Italy}.
\bjournal{European Journal of Epidemiology}
\bvolume{10}
\bpages{669--673}.
\end{barticle}
\endbibitem

\bibitem[\protect\citeauthoryear{Conti et~al.}{1997}]{conti1997premature}
\begin{barticle}[author]
\bauthor{\bsnm{Conti},~\bfnm{Susanna}\binits{S.}},
  \bauthor{\bsnm{Masocco},~\bfnm{Maria}\binits{M.}},
  \bauthor{\bsnm{Farchi},~\bfnm{Gino}\binits{G.}},
  \bauthor{\bsnm{Rezza},~\bfnm{Giovanni}\binits{G.}} \AND
  \bauthor{\bsnm{Toccaceli},~\bfnm{Virgilia}\binits{V.}}
(\byear{1997}).
\btitle{Premature mortality in Italy during the first decade of the AIDS
  epidemic: 1984-1993.}
\bjournal{International Journal of Epidemiology}
\bvolume{26}
\bpages{873--879}.
\end{barticle}
\endbibitem

\bibitem[\protect\citeauthoryear{Currie}{2016}]{currie2016fitting}
\begin{barticle}[author]
\bauthor{\bsnm{Currie},~\bfnm{Iain~D}\binits{I.~D.}}
(\byear{2016}).
\btitle{On fitting generalized linear and non-linear models of mortality}.
\bjournal{Scandinavian Actuarial Journal}
\bvolume{2016}
\bpages{356--383}.
\end{barticle}
\endbibitem

\bibitem[\protect\citeauthoryear{Currie, Durban and
  Eilers}{2004}]{currie2004smoothing}
\begin{barticle}[author]
\bauthor{\bsnm{Currie},~\bfnm{Iain~D}\binits{I.~D.}},
  \bauthor{\bsnm{Durban},~\bfnm{Maria}\binits{M.}} \AND
  \bauthor{\bsnm{Eilers},~\bfnm{Paul~HC}\binits{P.~H.}}
(\byear{2004}).
\btitle{Smoothing and forecasting mortality rates}.
\bjournal{Statistical Modelling}
\bvolume{4}
\bpages{279--298}.
\end{barticle}
\endbibitem

\bibitem[\protect\citeauthoryear{Czado, Delwarde and
  Denuit}{2005}]{czado2005bayesian}
\begin{barticle}[author]
\bauthor{\bsnm{Czado},~\bfnm{Claudia}\binits{C.}},
  \bauthor{\bsnm{Delwarde},~\bfnm{Antoine}\binits{A.}} \AND
  \bauthor{\bsnm{Denuit},~\bfnm{Michel}\binits{M.}}
(\byear{2005}).
\btitle{Bayesian Poisson log-bilinear mortality projections}.
\bjournal{Insurance: Mathematics and Economics}
\bvolume{36}
\bpages{260--284}.
\end{barticle}
\endbibitem

\bibitem[\protect\citeauthoryear{Dellaportas, Smith and
  Stavropoulos}{2001}]{dellaportas2001bayesian}
\begin{barticle}[author]
\bauthor{\bsnm{Dellaportas},~\bfnm{Petros}\binits{P.}},
  \bauthor{\bsnm{Smith},~\bfnm{Adrian~FM}\binits{A.~F.}} \AND
  \bauthor{\bsnm{Stavropoulos},~\bfnm{Photis}\binits{P.}}
(\byear{2001}).
\btitle{Bayesian analysis of mortality data}.
\bjournal{Journal of the Royal Statistical Society: Series A (Statistics in
  Society)}
\bvolume{164}
\bpages{275--291}.
\end{barticle}
\endbibitem

\bibitem[\protect\citeauthoryear{Delwarde, Denuit and
  Eilers}{2007}]{delwarde2007smoothing}
\begin{barticle}[author]
\bauthor{\bsnm{Delwarde},~\bfnm{Antoine}\binits{A.}},
  \bauthor{\bsnm{Denuit},~\bfnm{Michel}\binits{M.}} \AND
  \bauthor{\bsnm{Eilers},~\bfnm{Paul}\binits{P.}}
(\byear{2007}).
\btitle{Smoothing the Lee--Carter and Poisson log-bilinear models for mortality
  forecasting: a penalized log-likelihood approach}.
\bjournal{Statistical Modelling}
\bvolume{7}
\bpages{29--48}.
\end{barticle}
\endbibitem

\bibitem[\protect\citeauthoryear{Drefahl, Ahlbom and
  Modig}{2014}]{drefahl2014losing}
\begin{barticle}[author]
\bauthor{\bsnm{Drefahl},~\bfnm{Sven}\binits{S.}},
  \bauthor{\bsnm{Ahlbom},~\bfnm{Anders}\binits{A.}} \AND
  \bauthor{\bsnm{Modig},~\bfnm{Karin}\binits{K.}}
(\byear{2014}).
\btitle{Losing ground-Swedish life expectancy in a comparative perspective}.
\bjournal{PloS one}
\bvolume{9}
\bpages{e88357}.
\end{barticle}
\endbibitem

\bibitem[\protect\citeauthoryear{Durbin and Koopman}{2012}]{durbin2012time}
\begin{bbook}[author]
\bauthor{\bsnm{Durbin},~\bfnm{James}\binits{J.}} \AND
  \bauthor{\bsnm{Koopman},~\bfnm{Siem~Jan}\binits{S.~J.}}
(\byear{2012}).
\btitle{Time Series Analysis by State Space Methods}.
\bpublisher{OUP Oxford}.
\end{bbook}
\endbibitem

\bibitem[\protect\citeauthoryear{Eliers and Marx}{1996}]{eliers1996flexible}
\begin{barticle}[author]
\bauthor{\bsnm{Eliers},~\bfnm{P}\binits{P.}} \AND
  \bauthor{\bsnm{Marx},~\bfnm{B}\binits{B.}}
(\byear{1996}).
\btitle{Flexible Smoothing using B-splines and Penalized Likelihood (with
  comments and rejoinder)}.
\bjournal{Statistical Science}
\bvolume{11}
\bpages{89--121}.
\end{barticle}
\endbibitem

\bibitem[\protect\citeauthoryear{Ginsborg}{1990}]{ginsborg1990history}
\begin{bbook}[author]
\bauthor{\bsnm{Ginsborg},~\bfnm{Paul}\binits{P.}}
(\byear{1990}).
\btitle{A History of Contemporary Italy: 1943-80}.
\bpublisher{Penguin UK}.
\end{bbook}
\endbibitem

\bibitem[\protect\citeauthoryear{Glei}{2022}]{glei2021us}
\begin{barticle}[author]
\bauthor{\bsnm{Glei},~\bfnm{Dana~A}\binits{D.~A.}}
(\byear{2022}).
\btitle{The US midlife mortality crisis continues: excess cause-specific
  mortality during 2020}.
\bjournal{American Journal of Epidemiology}
\bvolume{191}
\bpages{1677--1686}.
\end{barticle}
\endbibitem

\bibitem[\protect\citeauthoryear{Goldstein and
  Lee}{2020}]{goldstein2020demographic}
\begin{barticle}[author]
\bauthor{\bsnm{Goldstein},~\bfnm{Joshua~R}\binits{J.~R.}} \AND
  \bauthor{\bsnm{Lee},~\bfnm{Ronald~D}\binits{R.~D.}}
(\byear{2020}).
\btitle{Demographic perspectives on the mortality of COVID-19 and other
  epidemics}.
\bjournal{Proceedings of the National Academy of Sciences}
\bvolume{117}
\bpages{22035--22041}.
\end{barticle}
\endbibitem

\bibitem[\protect\citeauthoryear{Haberman and
  Renshaw}{2011}]{haberman2011comparative}
\begin{barticle}[author]
\bauthor{\bsnm{Haberman},~\bfnm{Steven}\binits{S.}} \AND
  \bauthor{\bsnm{Renshaw},~\bfnm{Arthur}\binits{A.}}
(\byear{2011}).
\btitle{A comparative study of parametric mortality projection models}.
\bjournal{Insurance: Mathematics and Economics}
\bvolume{48}
\bpages{35--55}.
\end{barticle}
\endbibitem

\bibitem[\protect\citeauthoryear{Heligman and Pollard}{1980}]{heligman1980age}
\begin{barticle}[author]
\bauthor{\bsnm{Heligman},~\bfnm{Larry}\binits{L.}} \AND
  \bauthor{\bsnm{Pollard},~\bfnm{John~H}\binits{J.~H.}}
(\byear{1980}).
\btitle{The age pattern of mortality}.
\bjournal{Journal of the Institute of Actuaries}
\bvolume{107}
\bpages{49--80}.
\end{barticle}
\endbibitem

\bibitem[\protect\citeauthoryear{Helske}{2017}]{helske2017kfas}
\begin{barticle}[author]
\bauthor{\bsnm{Helske},~\bfnm{Jouni}\binits{J.}}
(\byear{2017}).
\btitle{{KFAS}: exponential family state space models in {R}}.
\bjournal{Journal of Statistical Software, Articles}
\bvolume{78}
\bpages{1--39}.
\bdoi{10.18637/jss.v078.i10}
\end{barticle}
\endbibitem

\bibitem[\protect\citeauthoryear{Ho and Preston}{2010}]{ho2010us}
\begin{barticle}[author]
\bauthor{\bsnm{Ho},~\bfnm{Jessica~Y}\binits{J.~Y.}} \AND
  \bauthor{\bsnm{Preston},~\bfnm{Samuel~H}\binits{S.~H.}}
(\byear{2010}).
\btitle{US mortality in an international context: Age variations}.
\bjournal{Population and Development Review}
\bvolume{36}
\bpages{749--773}.
\end{barticle}
\endbibitem

\bibitem[\protect\citeauthoryear{Holford}{1983}]{holford1983estimation}
\begin{barticle}[author]
\bauthor{\bsnm{Holford},~\bfnm{Theodore~R}\binits{T.~R.}}
(\byear{1983}).
\btitle{The estimation of age, period and cohort effects for vital rates}.
\bjournal{Biometrics}
\bpages{311--324}.
\end{barticle}
\endbibitem

\bibitem[\protect\citeauthoryear{Hunt and Blake}{2021}]{hunt2021structure}
\begin{barticle}[author]
\bauthor{\bsnm{Hunt},~\bfnm{Andrew}\binits{A.}} \AND
  \bauthor{\bsnm{Blake},~\bfnm{David}\binits{D.}}
(\byear{2021}).
\btitle{On the structure and classification of mortality models}.
\bjournal{North American Actuarial Journal}
\bvolume{25}
\bpages{S215--S234}.
\end{barticle}
\endbibitem

\bibitem[\protect\citeauthoryear{Hyndman, Booth and
  Yasmeen}{2013}]{hyndman2013coherent}
\begin{barticle}[author]
\bauthor{\bsnm{Hyndman},~\bfnm{Rob~J}\binits{R.~J.}},
  \bauthor{\bsnm{Booth},~\bfnm{Heather}\binits{H.}} \AND
  \bauthor{\bsnm{Yasmeen},~\bfnm{Farah}\binits{F.}}
(\byear{2013}).
\btitle{Coherent mortality forecasting: the product-ratio method with
  functional time series models}.
\bjournal{Demography}
\bvolume{50}
\bpages{261--283}.
\end{barticle}
\endbibitem

\bibitem[\protect\citeauthoryear{Hyndman and Ullah}{2007}]{hyndman2007robust}
\begin{barticle}[author]
\bauthor{\bsnm{Hyndman},~\bfnm{Rob~J}\binits{R.~J.}} \AND
  \bauthor{\bsnm{Ullah},~\bfnm{Md~Shahid}\binits{M.~S.}}
(\byear{2007}).
\btitle{Robust forecasting of mortality and fertility rates: A functional data
  approach}.
\bjournal{Computational Statistics \& Data Analysis}
\bvolume{51}
\bpages{4942--4956}.
\end{barticle}
\endbibitem

\bibitem[\protect\citeauthoryear{Hyndman et~al.}{2014}]{hyndman2006demography}
\begin{barticle}[author]
\bauthor{\bsnm{Hyndman},~\bfnm{Rob~J}\binits{R.~J.}},
  \bauthor{\bsnm{Booth},~\bfnm{H}\binits{H.}},
  \bauthor{\bsnm{Tickle},~\bfnm{L}\binits{L.}} \AND
  \bauthor{\bsnm{Maindonald},~\bfnm{J}\binits{J.}}
(\byear{2014}).
\btitle{Demography: Forecasting Mortality, Fertility, Migration and Population
  Data}.
\bjournal{R package version 1.18,
  \url{https://CRAN.R-project.org/package=demography}}.
\end{barticle}
\endbibitem

\bibitem[\protect\citeauthoryear{Juul et~al.}{2022}]{juul2022mortality}
\begin{barticle}[author]
\bauthor{\bsnm{Juul},~\bfnm{Frederik~E}\binits{F.~E.}},
  \bauthor{\bsnm{Jodal},~\bfnm{Henriette~C}\binits{H.~C.}},
  \bauthor{\bsnm{Barua},~\bfnm{Ishita}\binits{I.}},
  \bauthor{\bsnm{Refsum},~\bfnm{Erle}\binits{E.}},
  \bauthor{\bsnm{Olsvik},~\bfnm{{\O}rjan}\binits{{\O}.}},
  \bauthor{\bsnm{Helsingen},~\bfnm{Lise~M}\binits{L.~M.}},
  \bauthor{\bsnm{L{\o}berg},~\bfnm{Magnus}\binits{M.}},
  \bauthor{\bsnm{Bretthauer},~\bfnm{Michael}\binits{M.}},
  \bauthor{\bsnm{Kalager},~\bfnm{Mette}\binits{M.}} \AND
  \bauthor{\bsnm{Emilsson},~\bfnm{Louise}\binits{L.}}
(\byear{2022}).
\btitle{Mortality in Norway and Sweden during the COVID-19 pandemic}.
\bjournal{Scandinavian Journal of Public Health}
\bvolume{50}
\bpages{38--45}.
\end{barticle}
\endbibitem

\bibitem[\protect\citeauthoryear{Kalman}{1960}]{kalman1960new}
\begin{barticle}[author]
\bauthor{\bsnm{Kalman},~\bfnm{Rudolph~Emil}\binits{R.~E.}}
(\byear{1960}).
\btitle{A new approach to linear filtering and prediction problems}.
\bjournal{Journal of Basic Engeneering}
\bvolume{82}
\bpages{35--45}.
\end{barticle}
\endbibitem

\bibitem[\protect\citeauthoryear{Katzmarzyk, Salbaum and
  Heymsfield}{2020}]{katzmarzyk2020obesity}
\begin{barticle}[author]
\bauthor{\bsnm{Katzmarzyk},~\bfnm{Peter~T}\binits{P.~T.}},
  \bauthor{\bsnm{Salbaum},~\bfnm{J~Michael}\binits{J.~M.}} \AND
  \bauthor{\bsnm{Heymsfield},~\bfnm{Steven~B}\binits{S.~B.}}
(\byear{2020}).
\btitle{Obesity, noncommunicable diseases, and COVID-19: A perfect storm}.
\bjournal{American Journal of Human Biology}
\bvolume{32}
\bpages{e23484}.
\end{barticle}
\endbibitem

\bibitem[\protect\citeauthoryear{Kj{\ae}rgaard
  et~al.}{2019}]{kjaergaard2019forecasting}
\begin{barticle}[author]
\bauthor{\bsnm{Kj{\ae}rgaard},~\bfnm{S{\o}ren}\binits{S.}},
  \bauthor{\bsnm{Ergemen},~\bfnm{Yunus~Emre}\binits{Y.~E.}},
  \bauthor{\bsnm{Kallestrup-Lamb},~\bfnm{Malene}\binits{M.}},
  \bauthor{\bsnm{Oeppen},~\bfnm{Jim}\binits{J.}} \AND
  \bauthor{\bsnm{Lindahl-Jacobsen},~\bfnm{Rune}\binits{R.}}
(\byear{2019}).
\btitle{Forecasting causes of death by using compositional data analysis: the
  case of cancer deaths}.
\bjournal{Journal of the Royal Statistical Society: Series C (Applied
  Statistics)}
\bvolume{68}
\bpages{1351--1370}.
\end{barticle}
\endbibitem

\bibitem[\protect\citeauthoryear{Koopman and Durbin}{2000}]{koopman2000fast}
\begin{barticle}[author]
\bauthor{\bsnm{Koopman},~\bfnm{Siem~Jan}\binits{S.~J.}} \AND
  \bauthor{\bsnm{Durbin},~\bfnm{James}\binits{J.}}
(\byear{2000}).
\btitle{Fast filtering and smoothing for multivariate state space models}.
\bjournal{Journal of Time Series Analysis}
\bvolume{21}
\bpages{281--296}.
\end{barticle}
\endbibitem

\bibitem[\protect\citeauthoryear{Land}{1986}]{land1986methods}
\begin{barticle}[author]
\bauthor{\bsnm{Land},~\bfnm{Kenneth~C}\binits{K.~C.}}
(\byear{1986}).
\btitle{Methods for national population forecasts: A review}.
\bjournal{Journal of the American Statistical Association}
\bvolume{81}
\bpages{888--901}.
\end{barticle}
\endbibitem

\bibitem[\protect\citeauthoryear{Lang and Brezger}{2004}]{lang2004bayesian}
\begin{barticle}[author]
\bauthor{\bsnm{Lang},~\bfnm{Stefan}\binits{S.}} \AND
  \bauthor{\bsnm{Brezger},~\bfnm{Andreas}\binits{A.}}
(\byear{2004}).
\btitle{Bayesian P-splines}.
\bjournal{Journal of Computational and Graphical Statistics}
\bvolume{13}
\bpages{183--212}.
\end{barticle}
\endbibitem

\bibitem[\protect\citeauthoryear{Lauritzen}{1996}]{lauritzen1996graphical}
\begin{bbook}[author]
\bauthor{\bsnm{Lauritzen},~\bfnm{Steffen~L}\binits{S.~L.}}
(\byear{1996}).
\btitle{Graphical Models}.
\bpublisher{Clarendon Press}.
\end{bbook}
\endbibitem

\bibitem[\protect\citeauthoryear{Lee and Carter}{1992}]{lee1992modeling}
\begin{barticle}[author]
\bauthor{\bsnm{Lee},~\bfnm{Ronald~D}\binits{R.~D.}} \AND
  \bauthor{\bsnm{Carter},~\bfnm{Lawrence~R}\binits{L.~R.}}
(\byear{1992}).
\btitle{Modeling and forecasting US mortality}.
\bjournal{Journal of the American Statistical Association}
\bvolume{87}
\bpages{659--671}.
\end{barticle}
\endbibitem

\bibitem[\protect\citeauthoryear{Lee and Miller}{2001}]{lee2001evaluating}
\begin{barticle}[author]
\bauthor{\bsnm{Lee},~\bfnm{Ronald}\binits{R.}} \AND
  \bauthor{\bsnm{Miller},~\bfnm{Timothy}\binits{T.}}
(\byear{2001}).
\btitle{Evaluating the performance of the Lee-Carter method for forecasting
  mortality}.
\bjournal{Demography}
\bvolume{38}
\bpages{537--549}.
\end{barticle}
\endbibitem

\bibitem[\protect\citeauthoryear{Li and Lee}{2005}]{li2005coherent}
\begin{barticle}[author]
\bauthor{\bsnm{Li},~\bfnm{Nan}\binits{N.}} \AND
  \bauthor{\bsnm{Lee},~\bfnm{Ronald}\binits{R.}}
(\byear{2005}).
\btitle{Coherent mortality forecasts for a group of populations: An extension
  of the Lee-Carter method}.
\bjournal{Demography}
\bvolume{42}
\bpages{575--594}.
\end{barticle}
\endbibitem

\bibitem[\protect\citeauthoryear{Li, Lee and Gerland}{2013}]{li2013extending}
\begin{barticle}[author]
\bauthor{\bsnm{Li},~\bfnm{Nan}\binits{N.}},
  \bauthor{\bsnm{Lee},~\bfnm{Ronald}\binits{R.}} \AND
  \bauthor{\bsnm{Gerland},~\bfnm{Patrick}\binits{P.}}
(\byear{2013}).
\btitle{Extending the Lee-Carter method to model the rotation of age patterns
  of mortality decline for long-term projections}.
\bjournal{Demography}
\bvolume{50}
\bpages{2037--2051}.
\end{barticle}
\endbibitem

\bibitem[\protect\citeauthoryear{Liang et~al.}{2014}]{liang2014modeling}
\begin{barticle}[author]
\bauthor{\bsnm{Liang},~\bfnm{Ye}\binits{Y.}},
  \bauthor{\bsnm{Sun},~\bfnm{Dongchu}\binits{D.}},
  \bauthor{\bsnm{He},~\bfnm{Chong}\binits{C.}} \AND
  \bauthor{\bsnm{Schootman},~\bfnm{Mario}\binits{M.}}
(\byear{2014}).
\btitle{Modeling bounded outcome scores using the binomial-logit-normal
  distribution}.
\bjournal{Chilean Journal of Statistics}
\bvolume{5}
\bpages{3--14}.
\end{barticle}
\endbibitem

\bibitem[\protect\citeauthoryear{Mazzuco, Scarpa and
  Zanotto}{2018}]{mazzuco2018mortality}
\begin{barticle}[author]
\bauthor{\bsnm{Mazzuco},~\bfnm{Stefano}\binits{S.}},
  \bauthor{\bsnm{Scarpa},~\bfnm{Bruno}\binits{B.}} \AND
  \bauthor{\bsnm{Zanotto},~\bfnm{Lucia}\binits{L.}}
(\byear{2018}).
\btitle{A mortality model based on a mixture distribution function}.
\bjournal{Population Studies}
\bvolume{72}
\bpages{191--200}.
\end{barticle}
\endbibitem

\bibitem[\protect\citeauthoryear{Osmond}{1985}]{osmond1985using}
\begin{barticle}[author]
\bauthor{\bsnm{Osmond},~\bfnm{Clive}\binits{C.}}
(\byear{1985}).
\btitle{Using age, period and cohort models to estimate future mortality
  rates}.
\bjournal{International Journal of Epidemiology}
\bvolume{14}
\bpages{124--129}.
\end{barticle}
\endbibitem

\bibitem[\protect\citeauthoryear{O’Hare and Li}{2012}]{o2012explaining}
\begin{barticle}[author]
\bauthor{\bsnm{O’Hare},~\bfnm{Colin}\binits{C.}} \AND
  \bauthor{\bsnm{Li},~\bfnm{Youwei}\binits{Y.}}
(\byear{2012}).
\btitle{Explaining young mortality}.
\bjournal{Insurance: Mathematics and Economics}
\bvolume{50}
\bpages{12--25}.
\end{barticle}
\endbibitem

\bibitem[\protect\citeauthoryear{Plat}{2009}]{plat2009stochastic}
\begin{barticle}[author]
\bauthor{\bsnm{Plat},~\bfnm{Richard}\binits{R.}}
(\byear{2009}).
\btitle{On stochastic mortality modeling}.
\bjournal{Insurance: Mathematics and Economics}
\bvolume{45}
\bpages{393--404}.
\end{barticle}
\endbibitem

\bibitem[\protect\citeauthoryear{Preston and
  Vierboom}{2021}]{preston2021excess}
\begin{barticle}[author]
\bauthor{\bsnm{Preston},~\bfnm{Samuel~H}\binits{S.~H.}} \AND
  \bauthor{\bsnm{Vierboom},~\bfnm{Yana~C}\binits{Y.~C.}}
(\byear{2021}).
\btitle{Excess mortality in the United States in the 21st century}.
\bjournal{Proceedings of the National Academy of Sciences}
\bvolume{118}
\bpages{e2024850118}.
\end{barticle}
\endbibitem

\bibitem[\protect\citeauthoryear{Raftery et~al.}{2013}]{raftery2013bayesian}
\begin{barticle}[author]
\bauthor{\bsnm{Raftery},~\bfnm{Adrian~E}\binits{A.~E.}},
  \bauthor{\bsnm{Chunn},~\bfnm{Jennifer~L}\binits{J.~L.}},
  \bauthor{\bsnm{Gerland},~\bfnm{Patrick}\binits{P.}} \AND
  \bauthor{\bsnm{{\v{S}}ev{\v{c}}{\'\i}kov{\'a}},~\bfnm{Hana}\binits{H.}}
(\byear{2013}).
\btitle{Bayesian probabilistic projections of life expectancy for all
  countries}.
\bjournal{Demography}
\bvolume{50}
\bpages{777--801}.
\end{barticle}
\endbibitem

\bibitem[\protect\citeauthoryear{Rasmussen and
  Williams}{2006}]{williams2006gaussian}
\begin{bbook}[author]
\bauthor{\bsnm{Rasmussen},~\bfnm{Carl~Edward}\binits{C.~E.}} \AND
  \bauthor{\bsnm{Williams},~\bfnm{Christopher~KI}\binits{C.~K.}}
(\byear{2006}).
\btitle{Gaussian Processes for Machine Learning}.
\bpublisher{MIT press Cambridge, MA}.
\end{bbook}
\endbibitem

\bibitem[\protect\citeauthoryear{Remund, Camarda and
  Riffe}{2018}]{remund2018cause}
\begin{barticle}[author]
\bauthor{\bsnm{Remund},~\bfnm{Adrien}\binits{A.}},
  \bauthor{\bsnm{Camarda},~\bfnm{Carlo~G}\binits{C.~G.}} \AND
  \bauthor{\bsnm{Riffe},~\bfnm{Tim}\binits{T.}}
(\byear{2018}).
\btitle{A cause-of-death decomposition of young adult excess mortality}.
\bjournal{Demography}
\bvolume{55}
\bpages{957--978}.
\end{barticle}
\endbibitem

\bibitem[\protect\citeauthoryear{Renshaw and Haberman}{2003}]{renshaw2003lee}
\begin{barticle}[author]
\bauthor{\bsnm{Renshaw},~\bfnm{Arthur~E}\binits{A.~E.}} \AND
  \bauthor{\bsnm{Haberman},~\bfnm{Steven}\binits{S.}}
(\byear{2003}).
\btitle{Lee--Carter mortality forecasting with age-specific enhancement}.
\bjournal{Insurance: Mathematics and Economics}
\bvolume{33}
\bpages{255--272}.
\end{barticle}
\endbibitem

\bibitem[\protect\citeauthoryear{Renshaw and
  Haberman}{2006}]{renshaw2006cohort}
\begin{barticle}[author]
\bauthor{\bsnm{Renshaw},~\bfnm{Arthur~E}\binits{A.~E.}} \AND
  \bauthor{\bsnm{Haberman},~\bfnm{Steven}\binits{S.}}
(\byear{2006}).
\btitle{A cohort-based extension to the Lee--Carter model for mortality
  reduction factors}.
\bjournal{Insurance: Mathematics and Economics}
\bvolume{38}
\bpages{556--570}.
\end{barticle}
\endbibitem

\bibitem[\protect\citeauthoryear{Vaupel et~al.}{1994}]{vaupel1994longer}
\begin{barticle}[author]
\bauthor{\bsnm{Vaupel},~\bfnm{James}\binits{J.}},
  \bauthor{\bsnm{Lundstrom},~\bfnm{Hans}\binits{H.}} \betal{et~al.}
(\byear{1994}).
\btitle{Longer life expectancy? Evidence from Sweden of reductions in mortality
  rates at advanced ages}.
\bjournal{Studies in the Economics of Aging}
\bpages{79--102}.
\end{barticle}
\endbibitem

\bibitem[\protect\citeauthoryear{Villegas, Kaishev and
  Millossovich}{2018}]{millossovich2018stmomo}
\begin{barticle}[author]
\bauthor{\bsnm{Villegas},~\bfnm{Andr{\'e}s~M}\binits{A.~M.}},
  \bauthor{\bsnm{Kaishev},~\bfnm{Vladimir~K}\binits{V.~K.}} \AND
  \bauthor{\bsnm{Millossovich},~\bfnm{Pietro}\binits{P.}}
(\byear{2018}).
\btitle{StMoMo: An R package for stochastic mortality modelling}.
\bjournal{Journal of Statistical Software}
\bvolume{84}
\bpages{1--38}.
\end{barticle}
\endbibitem

\bibitem[\protect\citeauthoryear{Wang, Pantelous and
  Vahid}{2023}]{wang2022multi}
\begin{barticle}[author]
\bauthor{\bsnm{Wang},~\bfnm{Pengjie}\binits{P.}},
  \bauthor{\bsnm{Pantelous},~\bfnm{Athanasios~A}\binits{A.~A.}} \AND
  \bauthor{\bsnm{Vahid},~\bfnm{Farshid}\binits{F.}}
(\byear{2023}).
\btitle{Multi-population mortality projection: The augmented common factor
  model with structural breaks}.
\bjournal{International Journal of Forecasting}
\bvolume{39}
\bpages{450--469}.
\end{barticle}
\endbibitem

\bibitem[\protect\citeauthoryear{Wang et~al.}{2022}]{wang2022estimating}
\begin{barticle}[author]
\bauthor{\bsnm{Wang},~\bfnm{Haidong}\binits{H.}},
  \bauthor{\bsnm{Paulson},~\bfnm{Katherine~R}\binits{K.~R.}},
  \bauthor{\bsnm{Pease},~\bfnm{Spencer~A}\binits{S.~A.}},
  \bauthor{\bsnm{Watson},~\bfnm{Stefanie}\binits{S.}},
  \bauthor{\bsnm{Comfort},~\bfnm{Haley}\binits{H.}},
  \bauthor{\bsnm{Zheng},~\bfnm{Peng}\binits{P.}},
  \bauthor{\bsnm{Aravkin},~\bfnm{Aleksandr~Y}\binits{A.~Y.}},
  \bauthor{\bsnm{Bisignano},~\bfnm{Catherine}\binits{C.}},
  \bauthor{\bsnm{Barber},~\bfnm{Ryan~M}\binits{R.~M.}},
  \bauthor{\bsnm{Alam},~\bfnm{Tahiya}\binits{T.}} \betal{et~al.}
(\byear{2022}).
\btitle{Estimating excess mortality due to the COVID-19 pandemic: a systematic
  analysis of COVID-19-related mortality, 2020--21}.
\bjournal{The Lancet}
\bvolume{399}
\bpages{1513--1536}.
\end{barticle}
\endbibitem

\bibitem[\protect\citeauthoryear{Wen, Cairns and
  Kleinow}{2021}]{wen2021fitting}
\begin{barticle}[author]
\bauthor{\bsnm{Wen},~\bfnm{Jie}\binits{J.}},
  \bauthor{\bsnm{Cairns},~\bfnm{Andrew~JG}\binits{A.~J.}} \AND
  \bauthor{\bsnm{Kleinow},~\bfnm{Torsten}\binits{T.}}
(\byear{2021}).
\btitle{Fitting multi-population mortality models to socio-economic groups}.
\bjournal{Annals of Actuarial Science}
\bvolume{15}
\bpages{144--172}.
\end{barticle}
\endbibitem

\bibitem[\protect\citeauthoryear{Wiemers et~al.}{2020}]{wiemers2020disparities}
\begin{barticle}[author]
\bauthor{\bsnm{Wiemers},~\bfnm{Emily~E}\binits{E.~E.}},
  \bauthor{\bsnm{Abrahams},~\bfnm{Scott}\binits{S.}},
  \bauthor{\bsnm{AlFakhri},~\bfnm{Marwa}\binits{M.}},
  \bauthor{\bsnm{Hotz},~\bfnm{V~Joseph}\binits{V.~J.}},
  \bauthor{\bsnm{Schoeni},~\bfnm{Robert~F}\binits{R.~F.}} \AND
  \bauthor{\bsnm{Seltzer},~\bfnm{Judith~A}\binits{J.~A.}}
(\byear{2020}).
\btitle{Disparities in vulnerability to complications from COVID-19 arising
  from disparities in preexisting conditions in the United States}.
\bjournal{Research in Social Stratification and Mobility}
\bvolume{69}
\bpages{100553}.
\end{barticle}
\endbibitem

\bibitem[\protect\citeauthoryear{Wong, Forster and
  Smith}{2018}]{wong2018bayesian}
\begin{barticle}[author]
\bauthor{\bsnm{Wong},~\bfnm{Jackie~ST}\binits{J.~S.}},
  \bauthor{\bsnm{Forster},~\bfnm{Jonathan~J}\binits{J.~J.}} \AND
  \bauthor{\bsnm{Smith},~\bfnm{Peter~WF}\binits{P.~W.}}
(\byear{2018}).
\btitle{Bayesian mortality forecasting with overdispersion}.
\bjournal{Insurance: Mathematics and Economics}
\bvolume{83}
\bpages{206--221}.
\end{barticle}
\endbibitem

\bibitem[\protect\citeauthoryear{Woolf and Schoomaker}{2019}]{woolf2019life}
\begin{barticle}[author]
\bauthor{\bsnm{Woolf},~\bfnm{Steven~H}\binits{S.~H.}} \AND
  \bauthor{\bsnm{Schoomaker},~\bfnm{Heidi}\binits{H.}}
(\byear{2019}).
\btitle{Life expectancy and mortality rates in the United States, 1959-2017}.
\bjournal{JAMA}
\bvolume{322}
\bpages{1996--2016}.
\end{barticle}
\endbibitem

\bibitem[\protect\citeauthoryear{Zhu and Dunson}{2013}]{zhu2013locally}
\begin{barticle}[author]
\bauthor{\bsnm{Zhu},~\bfnm{Bin}\binits{B.}} \AND
  \bauthor{\bsnm{Dunson},~\bfnm{David~B}\binits{D.~B.}}
(\byear{2013}).
\btitle{Locally adaptive {Bayes} nonparametric regression via nested {Gaussian}
  processes}.
\bjournal{Journal of the American Statistical Association}
\bvolume{108}
\bpages{1445--1456}.
\end{barticle}
\endbibitem

\end{thebibliography}
\end{document}